\newtheorem*{theorem}{Theorem}
\theoremstyle{plain}
\newtheorem{proposition}{Proposition}
\newtheorem*{remark}{Remark}
\numberwithin{equation}{section}
\begin{document}
\title[\emph{Quantization of Time}]{The Stochastic Representation of Hamiltonian Dynamics and The Quantization of Time}
\author{M.~F.~Brown}
\address{Mathematics Department, University of Nottingham\\
NG7 2RD, UK.}
\email{pmxmb1@nottingham.ac.uk}
\date{June 2012}
\keywords{Time Quantization, Gravitation, Feynman Path Integrals, Object-Clock Interaction, Quantum Measurement, Stochastic Calculus}
\maketitle
\begin{center}
\emph{Thankyou Slava for your teachings}
\end{center}
\begin{abstract}
Here it is shown that the unitary dynamics of a quantum object may be obtained as the expectation of a counting process of object-clock interactions. Such a stochastic process arises from the quantization of the clock, and this is derived naturally from the matrix-algebra representation \cite{Be92b} of the nilpotent Newton-Leibniz differential time increment, $\mathrm{d}t$. Following \cite{Be00a} it is also shown that the object-clock interaction dynamics is unitarily equivalent to a pseudo-selfadjoint Schr\"odinger past-future boundary value problem.
\end{abstract}
\tableofcontents
\section{Introduction}
Quantum stochastic evolution is still an unfamiliar territory to many physicists but provides a powerful analytic tool for the study of open quantum systems. However, even among those familiar with the quantum stochastic calculus of Hudson and Parthasarathy \cite{HudP84} the more general, and more rigorous, Belavkin formalism \cite{Be91,Be92,Be92b} still remains unknown. Here it is intended to introduce the Belavkin formalism of quantum stochastic calculus as a fundamental concept of physics by focusing on the Schr\"odinger equation of a deterministic Hamiltonian dynamics. This is achieved by virtue of the quantization of time, which is an idea that arises from the Belavkin formalism alone and is lost in the Husdon-Parthasarathy formalism.

Quantum stochastic calculus is an algebraic approach to stochastic analysis that is applicable to all stochastic processes whether quantum or not. It was constructed using Hudson's and Parthsarathy's quantum It\^o algebra $\mathfrak{A}$ of stochastic increments \cite{HudP84} having a basis of four fundamental increments; these are $\mathrm{d}t,\mathrm{d}a,\mathrm{d}a^\ast,\mathrm{d}n$ that are respectively called the Newton-Leibniz, annihilation, creation, and counting differential increments. One may bare in mind that the Belavkin formalism arose from Belavkin's construction of the representing vector space of a quantum It\^o algebra. This turned out not to be a Hilbert space but remarkably a Minkowski-Hilbert space as we shall see more explicitly in the next chapter. The general theory of quantum stochastic calculus is beyond the scope of this paper, but one should note that the mathematical tool for the study of quantum stochastic evolution is Fock space.

Indeed it seems strange that one should be able to come to an understanding of the Belavkin formalism of quantum stochastic calculus through the Schr\"odinger equation which describes a deterministic evolution, not a stochastic one. However, as it shall be shown, such Hamiltonian dynamics of a closed quantum system represented in a Hilbert space $\mathfrak{h}$ may be dilated to a discrete object-clock stochastic interaction dynamics, a counting process, in the Hilbert product $\mathbb{H}:=\mathfrak{h}\otimes\mathbb{F}$ where $\mathbb{F}$ is the Minkowski-Fock space of the clock corresponding to the quantization of time. As one may now expect, the Hamiltonian dynamics described by the the Schr\"odinger equation in $\mathfrak{h}$ is obtained by tracing out over the Minkowski-Fock space of the `quantum clock'. It will become apparent that such `tracing out' is a rigorous approach to the Feynman path integral.

The purpose of this work is thus twofold. An understanding of this stochastic dilation of deterministic evolution provides a solid foundation for understanding the Belavkin formalism of the general quantum stochastic calculus since the general theory corresponds to the introduction of an additional `noise' degree of freedom to the quantum clock. As indeed it is true that, for example, such evolution as quantum Brownian motion may be given in the Belavkin formalism as the conditional expectation of a counting process. This runs deeply into the analysis of operators in Fock space, for it means that such operators may be obtained as the conditional expectation of Block-diagonal operators in a bigger Minkowski-Fock space of a stochastic system plus clock. And the purpose of this work is also to give rise to a new interpretation of the deterministic dynamics of a closed quantum system by performing a `microscopic dilation' of the system that opens up the evolution by separating the system and the time.

\section{The Algebraic Realization of Differential Calculus}
After the initial construction of the quantum  It\^o algebra in 1984 by Hudson and Parthasarathy \cite{HudP84}, Belavkin developed a more rigorous $\star$-algebraic formulation of this quantum It\^o calculus \cite{Be91,Be92,Be92b} that begins by defining an involution on the quantum It\^o algebra that is denoted by $\star$. The involution leaves the basis elements $\mathrm{d}t$ and $\mathrm{d}n$ invariant but swaps the increments of creation and annihilation, and the introduction of this involution \cite{Be92b} was the first step that had to be made in order to obtain a Minkowski space representation of the clock.

With attention now restricted to deterministic evolution the Newton-Leibniz time differential $\mathrm{d}t$ may be understood as the basis of a \emph{nilpotent $\star$-algebra}, that is  a one-dimensional sub-algebra $\mathfrak{a}\subset\mathfrak{A}$ of a general  $\star$-algebra $\mathfrak{A}$. The elements of $\mathfrak{a}$ have the form
\[a=\alpha\mathrm{d}t\]
 where $\mathrm{d}t$ is the single basis element of the algebra, and $\alpha$ is called the coefficient of $a$ and defines a deterministic derivative.
 The involution in $\mathfrak{a}$ is given by $\star:a\mapsto a^\star$ and leaves the base $\mathrm{d}t$ invariant, and may therefore be given by the involution of the coefficients  such that $a^\star=\overline{\alpha}\mathrm{d}t$ when $\alpha\in\mathbb{C}$, where $\overline{\alpha}$ is complex conjugation. Moreover,  $\mathfrak{a}$ may be  given as a linear isomorphism of the space of coefficients.

 The nilpotent $\star$-algebra is generated by the usual operation of addition but also the nilpotent Newton-Leibniz product $\bullet$ corresponding to the nilpotent property $\mathrm{d}t\mathrm{d}t=0$, such that
 \[
 a+a'=\big(\alpha+\alpha'\big)\mathrm{d}t,\quad aa'=0=\alpha\bullet \alpha'
 \]
 in $\mathfrak{a}$,  for all $a,a'\in\mathfrak{a}$.
 The $\star$-algebras are non-unital, but one may unitalize $\mathfrak{a}$ by adding the unit $\boldsymbol{1}\notin\mathfrak{a}$,  such that \[\boldsymbol{1} a=a=a\boldsymbol{1}\;\;\;\forall\;\;a\in\mathfrak{a}\]
 forming a $\star$-group $\mathfrak{g}=\boldsymbol{1}+\mathfrak{a}\equiv\mathfrak{a}_{\boldsymbol{1}}$ in one extra dimension (since $\boldsymbol{1}$ is not of the form $\alpha\mathrm{d}t$) and the involution in this group is defined on the elements $\boldsymbol{1}+a$ by $(\boldsymbol{1}+a)^\star=\boldsymbol{1}+a^\star$.
The construction of $\mathfrak{g}$ realizes the associative, but not distributive, \emph{monoidal} product $\cdot$  defined in the algebra $\mathfrak{a}$, by virtue of the group $\mathfrak{g}$, as \[\boldsymbol{1}+a\cdot a':=(\boldsymbol{1}+a)(\boldsymbol{1}+a')\]
  \cite{Be92b}. Since the $\star$-algebra product is  distributive we find that \[
  a\cdot a'\equiv a+a a'+a'=a+a'\]
   for all $a,a'\in\mathfrak{a}$, having the  unit $0\in a$ such that $0\cdot a=a=a\cdot 0$. One may wish to note that when working with the {general} $\star$-algebra $\mathfrak{A}$, containing also the algebraic realization of Wiener and Poisson increments,  the associative $\star$-algebra product  does not have the nilpotent Newton-Leibniz form but is the more general {It\^o product}, and this describes the differential It\^o correction of stochastic calculus.  Further, the unitalization $\mathfrak{A}_{\boldsymbol{1}}$ does not form a $\star$-group but  only $\star$-semigroup.

It is common knowledge in functional analysis that Gelfand, Naimark, and Segal, proved that every $C^\ast$-algebra can be represented as an operator algebra in a Hilbert space; that is  the {GNS construction}. However, proceeding in this manner  Belavkin discovered \cite{Be92,Be92b} that in the more general setting of It\^o $\star$-algebras, every such $\star$-algebra $\mathfrak{A}$ could be represented in a \emph{pseudo}-Hilbert space. Further, such pseudo-Hilbert space did not have arbitrary pseudo-metric, but a Minkowski metric.

To this end we shall investigate this \emph{Belavkin representation} of the nilpotent $\star$-algebra $\mathfrak{a}\subset\mathfrak{A}$ of the Newton-Leibniz calculus, and we shall come to a new understanding of the quantum dynamics generated by Hamiltonian operators; such is called  \emph{the stochastic representation of Hamiltonian dynamics}. It may even be argued that this stochastic representation of Newton-Leibniz calculus provides  microscopic interpretation of deterministic dynamics.

\section{Quantization of The Clock}
The nilpotent $\star$-algebra $\mathfrak{a}$ has a matrix representation $\mathfrak{d}$ that is a one-dimensional sub-algebra of the $2\times2$ matrix algebra $\mathcal{M}_2$. This is given by the $\star$-homomorphism $\pi:\mathfrak{a}\mapsto\mathfrak{d}$, that is
\begin{equation}
\pi(a'a^\star)=\pi(a')\pi(a)^\ddag
 \end{equation}
for all $a',a\in\mathfrak{a}$, and $\pi$ is extended onto the group $\mathfrak{g}$ as  $\pi(\boldsymbol{1})=\mathbf{I}$ where $\mathbf{I}$ is matrix identity in $\mathcal{M}_2$.
 Notice that we have a $\ddag$-involution in $\mathfrak{d}$ representing the $\star$-involution in $\mathfrak{a}$, and this is the pseudo-involution  given by a Minkowski metric $\boldsymbol{q}$ so that
\[\pi(a)^\ddag:=\boldsymbol{q}^{-1}\pi(a)^\ast \boldsymbol{q}^\ast=\pi(a^\star)
\]
with $\pi(\mathrm{d}t)^\ddag=\pi(\mathrm{d}t)$, where the $\ast$-involution is usual conjugation in $\mathcal{M}_2$ of matrix transposition and  generic involution in the space of coefficients (that is complex conjugation if the coefficients are complex).

In the canonical upper-triangular form \cite{Be92b} the elements of the $\ddag$-algebra $\mathfrak{d}$ (that is algebra with the Minkowski involution denoted by $\ddag$) are given explicitly, with their $\ddag$-adjoint, as
\[
 \pi(a)=\left[
          \begin{array}{cc}
            0 & \alpha \\
            0 & 0 \\
          \end{array}
        \right]\equiv\alpha\otimes\pi(\mathrm{d}t),\quad
        \pi(a)^\ddag=\left[
          \begin{array}{cc}
            0 & \alpha^\ast \\
            0 & 0 \\
          \end{array}
        \right]\equiv\alpha^\ast\otimes\pi(\mathrm{d}t)
 \]
 with respect to the elements $a=\alpha\mathrm{d}t$ in $\mathfrak{a}$, such that the Minkowski metric has the non-diagonal form
 \[
 \boldsymbol{q}=\left[
          \begin{array}{cc}
            0 & 1 \\
            1 & 0 \\
          \end{array}
        \right]\equiv\sigma_1.
 \]
Notice that the representing pseudo-Hilbert space of $\mathfrak{a}$ (called Minkowski-Hilbert space) always has these two degrees of freedom independent of the space of coefficients $\alpha$,  and it is important to understand that this is a consequence of the $2\times2$ matrix representation \begin{equation}\pi(\mathrm{d}t)=\left[
                     \begin{array}{cc}
                       0 & 1 \\
                       0 & 0 \\
                     \end{array}
                   \right]:=\mathbf{d}
\end{equation} of the differential time increment $\mathrm{d}t$ that forms the single basis element of the Newton-Leibniz algebra. These degrees of freedom shall be referred to as the
\emph{temporal-spin} of the clock having the  \emph{past} and \emph{future} null-states given respectively as
\[
\left[
    \begin{array}{c}
      1 \\
      0 \\
    \end{array}
  \right]\;\;\textup{and}\;\; \left[
    \begin{array}{c}
      0 \\
      1 \\
    \end{array}
  \right].
\]
Note in particular that this algebraic realization of Newton-Leibniz calculus allows us to represent a basic differential increment of time as a matrix operation transforming future into past
\[
\left[
           \begin{array}{cc}
             0 & 1 \\
             0 & 0 \\
           \end{array}
         \right]:\left[
    \begin{array}{c}
      0 \\
      1 \\
    \end{array}
  \right]\mapsto\left[
    \begin{array}{c}
      1 \\
      0 \\
    \end{array}
  \right],
\]
and so begins the algebraic realization of calculus.
\\
\linebreak
In the simplest case of scalar coefficients we have $\mathfrak{a}=\mathbb{C}\mathrm{d}t$, and this scalar $\star$-algebra is represented in the Minkowski-Hilbert space $\Bbbk=(\mathbb{C}^2,\sigma_1)$ \cite{thesis}
of column vectors
\[\xi=\left[
            \begin{array}{c}
              \xi^- \\
              \xi^+ \\
            \end{array}
          \right],
\] with the $\ddag$-adjoint rows
\[\xi^\ddag=\big(\xi_-,\xi_+\big)=\xi^\ast\sigma_1 \]
in the dual space $\Bbbk^\ddag\cong\Bbbk$ of linear functionals $\xi^\ddag:\Bbbk\rightarrow\mathbb{C}$,
where $\xi_\pm=\overline{\xi^\mp}$ and $\overline{\xi}$ is complex conjugation of $\xi$, and we define the $\ddag$-norm on $\Bbbk$  as
\begin{equation}
\|\xi\|=\Big(\xi_-\xi^-+\xi_+\xi^+\Big)^{\frac{1}{2}}\equiv\Big( \xi^\ddag\xi\Big)^{\frac{1}{2}}.\label{pip}
\end{equation}
Thus we may consider states $\rho_\xi$ on the $\star$-algebra $\mathbb{C}\mathrm{d}t$ given by a $\xi\in\Bbbk$ with $\xi_-\xi^+=|\xi^+|^2_{\mathbb{C}}=1$, where $|\cdot|_{\mathbb{C}}$ is norm in $\mathbb{C}$, such that
\[
\rho_\xi(a)=\xi^\ddag\alpha\pi(\mathrm{d}t)\xi=\alpha
\]
where $a=\alpha\mathrm{d}t$. These states may be linearly extended onto the $\star$-group $\mathfrak{g}$ by defining their action on the unit $\boldsymbol{1}$. We shall see that in order to have a pure state in the second quantization of the clock (that is an exponentiation) the states $\rho_\xi$ must be null such that $\rho_\xi(\boldsymbol{1})=0$. Thus purity is achieved if $\xi$ is a future state.

  So we have just considered the derivative $\alpha$ at some arbitrary fixed time $x=t$, but we would now like to consider a variable coordinate of the clock, $x\in\mathbb{R}_+$.  Thus consider a family $\{\mathfrak{a}_x\}$ of scalar $\star$-algebras  $\mathfrak{a}_x\equiv\mathbb{C}\mathrm{d}x$ over $\mathbb{R}_+$,
and a family of representing  Minkowski-Hilbert spaces $\{\Bbbk_x\}$ for the differentials $a(x)=\alpha(x)\mathrm{d}x\in\mathfrak{a}_x$, with $\Bbbk_x=(\mathbb{C}^2,\sigma_1)$ at each $x\in\mathbb{R}_+$. Now we define the nilpotent $\star$-algebra $\mathfrak{a}=L^1(\mathbb{R}_+)\mathrm{d}{t}$ with elements of the form
\[
\alpha\mathrm{d}{t}:x\mapsto \alpha(x)\mathrm{d}x\in\mathfrak{a}_x
\]
where $\alpha$ is in the space $L^1(\mathbb{R}_+)$ of Lebesgue integrable functions over $\mathbb{R}_+$.

Belavkin's representing Minkowski-Hilbert space for this Newton-Leibniz algebra has the general form \cite{Be92,Be92b,thesis}
\begin{equation}
\Bbbk=\big(L^1(\mathbb{R}_+)\oplus L^\infty(\mathbb{R}_+),\sigma_1\big)\label{phs}
\end{equation}
where $L^\infty(\mathbb{R}_+)$ is the space of essentially bounded functions over $\mathbb{R}_+$,
and the $\mathbb{R}$-valued pseudo-norm is given on the vector functions $\xi\in\Bbbk$ as
\begin{equation}
\|\xi\|=\bigg(\int_0^\infty\|\xi(x)\|^2\mathrm{d}x\bigg)^{\frac{1}{2}}
\end{equation}
where $\|\xi(x)\|$ is the norm of $\xi(x)$ in $\Bbbk_x\cong\mathbb{C}^2$ given by (\ref{pip}).
Notice that  the nilpotent  product $\bullet$ of two integrable functions is  zero  such that $L^1(\mathbb{R}_+)$   forms a  nilpotent $\star$-algebra
although it is  not a $\ast$-algebra with respect to the usual point-wise product of functions over $\mathbb{R}_+$.
\\
\linebreak
The pseudo-Hilbert space (\ref{phs}) may be realized from the requirement that the integrability of $\alpha$ must be preserved in the compression of $\boldsymbol{\alpha}=\alpha\otimes\pi(\mathrm{d}t)$ such that
\[
\big|\xi^\ddag \boldsymbol{\alpha}\xi\big|_{\mathbb{C}}<\infty\;\;\forall\;\;\xi\in\Bbbk\]
where $|\cdot|_{\mathbb{C}}$ is the norm in $\mathbb{C}$,
   and this means that the components $\xi^-,\xi^+$ of $\xi$ cannot be in $L^2(\mathbb{R}_+)$. However, we can realize  $\xi^-\in L^1(\mathbb{R}_+)$ and $\xi^+\in L^\infty(\mathbb{R}_+)$, such that
    \[
    \|\xi\|^2=\sum_{\kappa=-,+}\langle\xi_\kappa,\xi^\kappa\rangle\equiv \sum_{\kappa=-,+}\int_{\mathbb{R}_+}\xi_\kappa(x)\xi^\kappa(x)\mathrm{d}x
    \]
    with $\xi_-\in L^\infty(\mathbb{R}_+)$ and $\xi_+\in L^1(\mathbb{R}_+)$,
     such that $\langle\xi_\kappa,\xi^\kappa\rangle$, $\kappa=-,+$, is the dual pairing of $L^1(\mathbb{R}_+)$ and $L^\infty(\mathbb{R}_+)$. Note that in the context of integration $\mathrm{d}x$ is the Lebesgue measure on $\mathbb{R}_+$. This
  preserves the integrability of $\alpha$ in the pseudo-inner product as
\begin{equation}
\xi^\ddag \boldsymbol{\alpha}\xi=(\xi_-,\xi_+)\left[
                                       \begin{array}{cc}
                                         0 & \alpha \\
                                         0 & 0 \\
                                       \end{array}
                                     \right]\left[
                                              \begin{array}{c}
                                                \xi^- \\
                                                \xi^+ \\
                                              \end{array}
                                            \right]
                                           \equiv \langle \xi_-,\alpha\xi^+\rangle,
\end{equation}
where $\langle \xi_-,\alpha\xi^+\rangle$ is the dual pairing of $\xi_-\in L^\infty(\mathbb{R}_+)$ with $\alpha\xi^+\in L^1(\mathbb{R}_+)$, noting that the point-wise product $[\alpha\xi^+](x)=\alpha(x)\xi^+(x)$, of an essentially bounded function with an integrable function, is integrable. Also notice that if $\alpha>0$  then  $\xi^\ddag \boldsymbol{\alpha}\xi>0$ for all $\xi\in\Bbbk$ since $\xi^+=\xi_-^\ast$ such that $\langle \xi_-,\alpha\xi^+\rangle=\langle{\xi^+}^\ast\xi^+,\alpha\rangle$, where $[{\xi^+}^\ast\xi^+](x)=\overline{\xi^+(x)}\xi^+(x)$.
Thus we shall conclude this section with a consideration of the $\star$-algebra norm.

Since $\mathfrak{a}\cong L^1(\mathbb{R}_+)$ (linearly isomorphic) we simply define the norm on $\mathfrak{a}$ as \[\|\alpha \mathrm{d}\boldsymbol{t}\|=\|\alpha\|_1\equiv\big\langle1,|\alpha|\big\rangle,\quad1\in L^\infty(\mathbb{R}_+) \] as it was done in \cite{Be92,Be92b}, where $|\alpha|(x):=|\alpha(x)|_{\mathbb{C}}$,
and this norm may be obtained from basic vectors of the form ${\xi_\nu^\mu}=e^{-\mathrm{i}\mu \boldsymbol{t}}\xi_\nu$,
where
\begin{equation}
\xi_\nu=\left[
          \begin{array}{c}
            \nu \\
            1 \\
          \end{array}
        \right]\label{basic}
\end{equation}
 with $\nu\in L^1(\mathbb{R}_+)$ and $1\in L^\infty(\mathbb{R}_+)$, and $\boldsymbol{t}$ is the clock-coordinate operator $[\boldsymbol{t}\xi](x)=t(x)\xi(x)$, such that
\begin{equation}
{\xi^\mu_\nu}^\ddag\big[|\alpha|\otimes\pi(\mathrm{d}t)\big]\xi^\mu_\nu= \big\langle 1,|\alpha|\big\rangle\equiv \int_0^\infty |\alpha(x)|_{\mathbb{C}}\mathrm{d}x
\end{equation}
  for all $\nu\in L^1(\mathbb{R}_+)$ and $\mu\in\mathbb{R}$. The parameters $\mu$ and $\nu$ of the vector functions ${\xi_\nu^\mu}=e^{-\mathrm{i}\mu \boldsymbol{t}}\xi_\nu$ are non-trivial. We shall see in some detail below that $\nu$ describes a the interaction frequency of the dilated object-clock dynamics. We shall also see below that the parameter $\mu$ is a momentum that generates a free dynamics of the clock.

\section{The Stochastic Representation of Hamiltonian Dynamics}
So we have established that the nilpotent $\star$-algebra $\mathfrak{a}=L^1(\mathbb{R}_+) \mathrm{d}\boldsymbol{t}$ has matrix representation $\mathfrak{d}=L^1(\mathbb{R}_+)\otimes\mathbf{d}$, where $\mathbf{d}=\pi(\mathrm{d}t)$, in the Minkowski-Hilbert space $\Bbbk=\big(L^1(\mathbb{R}_+)\oplus L^\infty(\mathbb{R}_+),\sigma_1\big)$; that is the quantization of the clock obtained from the matrix representation of the Newton-Leibniz time increment. Now we would like to consider an arbitrary quantum system composed with this quantum clock; we shall refer to such quantum system as the object.
Thus we shall consider the algebraic tensor product $\mathcal{B}(\mathfrak{h})\bar{\otimes}\mathfrak{a}$ of $\mathfrak{a}$ with a  $C^\ast$-algebra $\mathcal{B}(\mathfrak{h})$ of bounded linear operators in a Hilbert space $\mathfrak{h}$. This forms a composite $\star$-algebra with pseudo-involution   given on the separable elements by the map  $\star:B\otimes a\mapsto B^\ast\otimes a^\star$.
We shall denote by $\boldsymbol{\gamma}=G\otimes\mathbf{d}$ the elements of the  $\ddag$-matrix algebra $\mathcal{B}(\mathfrak{h})\bar{\otimes}\mathfrak{d}$, considered as the maps
\[\boldsymbol{\gamma}:x\mapsto G(x)\otimes\mathbf{d},\]
 with $G(x)$ in $\mathcal{B}(\mathfrak{h})$ for almost all  $x\in\mathbb{R}_+$ as $G\in \mathcal{B}(\mathfrak{h})\bar{\otimes} L^1(\mathbb{R}_+)$.

The Hilbert space  $\mathfrak{h}$ is introduced here as a space of square-integrable functions $\eta$ defining the probability amplitudes of the internal degrees of freedom of a quantum system. Generally the Hilbert space $\mathfrak{h}$ is infinite dimensional, but need not be; here the choice is arbitrary. The task now before us is to present the underlying mechanism, as discovered in the more general setting by Belavkin, from which this object's deterministic unitary evolution may be derived.

 We begin with the compound Minkowski-Hilbert space $\mathfrak{h}\otimes\Bbbk$, that is the Hilbert product of the quantum object's Hilbert space $\mathfrak{h}$  with the Minkowski-Hilbert space $\Bbbk$ of the clock. The object is considered as an open sub-system in this compound space, and we shall see that a unitary dynamics of the object in $\mathfrak{h}$ may be obtained  as an expectation of a sequential interaction dynamics in Minkowski-Hilbert space. This sequential interaction dynamics is the entanglement of the object with the clock at arbitrary interaction times $\tau=\{x_1<\ldots<x_n\}$, $x_i\in\mathbb{R}_+$, and this time-ordered sequence of interactions is described by the equations of the  quantum measurement dynamics introduced in \cite{Be93}; but in contrast, here we work with Minkowski-Hilbert space.
 As a consequence of such quantum measurement interpretation, the action of the  Hamiltonian in $\mathfrak{h}$ may be understood as the indirect observation (measurement) of the passing of the object's time. Further, it may be inferred  that an increment of time  is the consequence of the observation.
  \\
\linebreak
Interaction operators in a closed system must be unitary, or in this case pseudo-unitary. So first we shall consider the general  pseudo-unitary operators $\boldsymbol{g}^\ddag=\boldsymbol{g}^{-1}$ in $\mathfrak{h}\otimes\Bbbk$, having the form
\[
\boldsymbol{g}=\left[
  \begin{array}{cc}
    Z^{-1} & G \\
    0 & Z^{\ast} \\
  \end{array}
\right]
\]
where $Z$ is invertible in $\mathcal{B}(\mathfrak{h})\bar{\otimes} L^\infty(\mathbb{R}_+)$ and $(ZG)^\ast=-ZG$, such that the action of the operators $\boldsymbol{g}$ on vectors functions $\psi\in\mathfrak{h}\otimes\Bbbk$ has the diagonal form \[[\boldsymbol{g}\psi](x)=\boldsymbol{g}(x)\psi(x).\] With calculus in mind one
  must insist that $Z= I$ such that $\boldsymbol{g}$ is an element of the $\ddag$-group $\mathcal{B}(\mathfrak{h})\bar{\otimes}\mathfrak{m}$ where $\mathfrak{m}:=\pi(\boldsymbol{1})+\mathfrak{d}=\pi(\mathfrak{g})$, where $\pi(\boldsymbol{1})$ is the identity $\mathbf{I}$ in $\mathcal{M}_2$. To see this, consider  an arbitrary operator $V\in\mathcal{B}(\mathfrak{h})$ whose incremental evolution may be given by the  $\star$-algebraic differential equation
\[
\boldsymbol{1} V+\mathrm{d}V=\boldsymbol{G}V
\]
 corresponding to the homogeneous (logarithmic) derivative $\mathrm{d}V=\boldsymbol{L}V$ in $\mathcal{B}(\mathfrak{h})\bar{\otimes}\mathfrak{a}$ where $\boldsymbol{L}=\boldsymbol{G}-\boldsymbol{1}$, with $\pi(\boldsymbol{G})=\boldsymbol{g}$.
Thus we shall denote by
\begin{equation}
\boldsymbol{\sigma}=\left[
             \begin{array}{cc}
               I & G \\
               0 & I \\
             \end{array}
           \right]\equiv \mathbf{I}+\boldsymbol{\gamma},\quad \boldsymbol{\gamma}=G\otimes\mathbf{d},\label{gamma}
\end{equation}
the canonical pseudo-unitary operators of the object-clock interactions. Notice that this $\ddag$-unitarity of $\boldsymbol{\sigma}$ imposes the constraint $G^\ast=-G$ in $\mathcal{B}(\mathfrak{h})\bar{\otimes} L^1(\mathbb{R}_+)$, which is the property that we require for a generator of unitary evolution in $\mathfrak{h}$ defining the object Hamiltonian $H(x)=\mathrm{i}G(x)$ at almost all $x\in\mathbb{R}_+$.

Notice that if we impose the constraint that the coefficients of $\boldsymbol{g}$ be $\mathbb{R}$-valued then the $\ddag$-unitary operators $\boldsymbol{g}^\ddag=\boldsymbol{g}^{-1}$ are none other than the real Lorentz transformations in their diagonal representation.
\\
\linebreak
Now, we consider the object to  be prepared in an arbitrary, but normalized, initial state $\eta\in\mathfrak{h}$, but we shall suppose that the clock is in an initial null state $\xi_0\equiv\xi_\emptyset$ (\ref{basic}), such that the temporal spin of the clock is reckoned to be in the future-state at all $x$ in $\mathbb{R}_+$. This means that the \emph{input state} is future.
Then a single  object-clock interaction may be given at an arbitrary point $x$ as the map
\[
\boldsymbol{\sigma}(x)=\left[
             \begin{array}{cc}
               I & G(x) \\
               0 & I \\
             \end{array}
           \right]:\left[
                     \begin{array}{c}
                       0 \\
                       \eta \\
                     \end{array}
                   \right]\mapsto\left[
                     \begin{array}{c}
                       G(x)\eta \\
                       \eta \\
                     \end{array}
                   \right]\label{pui}
\]
entangling the object and clock since the $\ddag$-unitary operator $\boldsymbol{\sigma}(x)$ is not separable. Further, the differential increment of the system, given by the action of the operator $\boldsymbol{\gamma}(x)$ (\ref{gamma}), is the transformation
\begin{equation}
\eta\otimes\left[
                     \begin{array}{c}
                      0 \\
                       1 \\
                     \end{array}
                   \right]\mapsto[G(x)\eta]\otimes\left[
                     \begin{array}{c}
                       1 \\
                       0 \\
                     \end{array}
                   \right],\label{act}
\end{equation}
such that the action of $G(x)$ on $\eta$ is an indirect measurement of the transition of future into past. Notice that if the input state is past then the interaction leaves the object state is unchanged.

The single jump equation for this pseudo-unitary interaction  is given by an It\^o-Schr\"odinger equation \cite{Be93} of the form
\begin{equation}
\mathrm{d}\psi_{t}(x)=\boldsymbol{\gamma}(x)\psi_{t}(x)\mathrm{d}1_{t}(x), \quad \psi_0(x)=\eta\otimes\xi_\emptyset(x),\label{1jump}
\end{equation}
where $x\in\mathbb{R}_+$ is the variable clock coordinate, $\boldsymbol{\gamma}(x)=\boldsymbol{\sigma}(x)-\mathbf{I}$, and $1_t(x)=1$ if $x<t$ and is otherwise zero, with $\mathrm{d}1_t(x)=1_{t+\mathrm{d}t}(x)-1_t(x)$. The parameter $t$ is the evolution parameter of this  interaction dynamics, and (\ref{1jump}) has the solution
\begin{equation}
\psi_t(x)=\boldsymbol{\sigma}_t(x)\psi_0(x),\quad \boldsymbol{\sigma}_t(x)=\boldsymbol{\gamma}(x)1_t(x)+ \mathbf{I},
\end{equation}
where $\psi_t(x)\in\mathfrak{h}\otimes\mathbb{C}^2$ with $\psi_t\in\mathfrak{h}\otimes\Bbbk$.
\begin{remark}
Notice that we may consider basic separable elements in $\mathcal{B}(\mathfrak{h})\bar{\otimes} L^1(\mathbb{R}_+)$ of the form $G=\breve{G}\otimes\nu$, where $\breve{G}\in\mathcal{B}(\mathfrak{h})$ and $\nu\in L^1(\mathbb{R}_+)$, such that $G(x)=\breve{G}\nu(x)$. In particular, this allows us to write the action of $\boldsymbol{\sigma}$ as
$\boldsymbol{\sigma}[\eta\otimes\xi_\emptyset]=[\breve{G}\eta]\otimes\xi_\nu$ $\big($where $\xi_\nu$ is given by (\ref{basic})$\big)$,  with $\nu(x)\in\mathbb{R}$ preserving the anti self-adjointness $\breve{G}^\ast=-\breve{G}$. Consider for example $\nu=1_t$.
\end{remark}
Now we shall consider a spontaneous process of these pseudo-scattering interactions between the object and the clock. To do this we must consider the second quantization $\mathbb{F}=\Gamma(\Bbbk)$ of the clock, that is given with respect to the pseudo-Fock norm below as the closed $\mathbb{C}$-linear span of the Hilbert product functions
\begin{equation}
\xi^\otimes:\tau\mapsto\underset{{x\in\tau}}{\otimes}\xi(x)\label{tens}
\end{equation}
given on  the ordered countable finite sets  $\tau=\{x_1<\cdots<x_{|\tau|}\}$, where $|\tau|=n(\tau)\in\{0\}\cup\mathbb{N}$ is the cardinality of this chain $\tau$ defining the number of interaction coordinates, called \emph{particles}, in the object's future. The space of all such finite chains is the disjoint union $\mathcal{X}=\sqcup_{n=0}^\infty\mathcal{X}_n$, where the space $\mathcal{X}_n\subset\mathbb{R}_+^n$ is the $n$-simplex over $\mathbb{R}_+$ of $n$-particle chains $\tau_n=\{x_1<\ldots<x_n\}$. The pseudo-norm on $\mathbb{F}$ is given on the product functions as
\begin{equation}
\|\xi^\otimes\|=\bigg(\int_{\mathcal{X}}\|\xi^\otimes(\tau)\|^2\mathrm{d}\tau \bigg)^{\frac{1}{2}} \equiv\bigg(\sum_{n=0}^\infty\int_{\mathcal{X}_n} \prod_{x\in\tau_n}\|\xi(x)\|^2\mathrm{d}\tau_n\bigg)^{\frac{1}{2}}= \exp\big\{\tfrac{\xi^\ddag\xi}{2}\big\},
\end{equation}
with respect to the $\ddag$-norm $\|\xi(x)\|$ on $\Bbbk_x$, where $\prod_{x\in\tau}\|\xi(x)\|=\|\xi^\otimes(\tau)\|$, and
\[
\int_{\mathcal{X}_n}f(\tau_n)\mathrm{d}\tau_n=\underset{0<x_1<\ldots<x_n<\infty}{\int\ldots\int} f(x_1,\ldots,x_n)\mathrm{d}x_1\ldots\mathrm{d}x_n,\quad f\in L^1(\mathcal{X}_n).
\]
This norm is given with respect to the Lebesgue measure $\mathrm{d}x$ on $\mathbb{R}_+$, and the atomic measure $\mathrm{d}\emptyset=1$ on the only atomic point $\tau_0=\emptyset$. This construction is called the Guichardet-Fock quantization of the clock \cite{Be92b,BelB10,thesis,Gui72} and $\mathbb{F}$ ia called Minkowski-Fock space,
and with respect to the pseudo-Fock norm we may define the \emph{coherent states} as the normalized product vectors $\xi^\otimes\exp\big\{-\frac{\xi^\ddag\xi}{2}\big\}$.

It should also be brought to one's attention  that the second quantization functor $\Gamma$ admits the identification
\begin{equation}
\Gamma\big(L^1(\mathbb{R}_+)\oplus L^\infty(\mathbb{R}_+)\big) =\Gamma\big(L^1(\mathbb{R}_+)\big)\otimes\Gamma\big(L^\infty(\mathbb{R}_+)\big),\label{fun}
\end{equation}
and the pseudo-Fock norm may be given with respect to this identification, for $\xi^\ddag=(\xi_-,\xi_+)$, as
\[
\|\xi_-^\otimes\otimes \xi_+^\otimes\|^2=\int_{\mathcal{X}}\int_{\mathcal{X}}\|\xi_-^\otimes(\upsilon)\xi_+^\otimes(\omega)\|^2 \mathrm{d}\upsilon\mathrm{d}\omega\equiv|\langle \xi_-^\otimes,\overline{\xi}_+^\otimes\rangle |_{\mathbb{C}}^2,
\]
with respect to the norms $\|\chi(\upsilon,\omega)\|^2=\chi^\ast(\omega,\upsilon)\chi(\upsilon,\omega)$, where $\chi^\ast(\omega,\upsilon)=\chi^\ddag(\upsilon,\omega)$, and the bilinear paring $\langle\cdot,\cdot\rangle$ of $L^\infty(\mathcal{X})=\Gamma\big(L^\infty(\mathbb{R}_+)\big)$ and $L^1(\mathcal{X})=\Gamma\big(L^1(\mathbb{R}_+)\big)$, where $|\cdot|_{\mathbb{C}}$ denotes norm in $\mathbb{C}$, $\chi\in\mathbb{F}$, and $\upsilon,\omega\in\mathcal{X}$.
\\
\linebreak
 The input product function $\xi_\emptyset^\otimes$ may be given in view of (\ref{fun}) as
 \[
 \xi_\emptyset^\otimes\equiv{\xi^-_\emptyset}^\otimes\otimes{\xi^+_\emptyset}^\otimes=\delta_\emptyset\otimes 1^\otimes,
 \]
 where $\delta_\emptyset=0^\otimes$ is vacuum vector $\delta_\emptyset(\upsilon)=0$ if $\upsilon\neq\emptyset$ in $\mathcal{X}$ and $\delta_\emptyset(\emptyset)=1$, and $1^\otimes$ is the identity in $L^\infty(\mathcal{X})$. It is therefore  referred to as the \emph{pseudo-vacuum} vector, and corresponds to an empty past. The pseudo-vacuum vector may also be considered as an operator $\boldsymbol{F}^\ddag_1:\mathfrak{h}\rightarrow\mathbb{H}$ embedding the object into the compound space $\mathbb{H}:=\mathfrak{h}\otimes\mathbb{F}$, such that
 \begin{equation}
 \boldsymbol{F}_1^\ddag\eta=\eta\otimes\xi_\emptyset^\otimes\equiv \delta_\emptyset\otimes\eta\otimes 1^\otimes.
 \end{equation}
The operator $\boldsymbol{F}^\ddag_1$ is called the pseudo-vacuum embedding \cite{Be92b}, and  it is the canonical isometric embedding  of the object into this quantum time-field (the second quantization of the clock). That is an environment of time particles, the quanta of the clock, that are all prepared in the future pure-state by the canonical embedding.

The second quantized  It\^o-Schr\"odinger equation may be given on variable chains $\tau\in\mathcal{X}$ as
\begin{equation}
\mathrm{d}\psi_{t}(\tau)=\boldsymbol{\gamma}(t)\psi_{t}(\tau)\mathrm{d}n_{t}(\tau), \quad \psi_0=\eta\otimes\xi_\emptyset^\otimes\equiv\boldsymbol{F}_1^\ddag\eta\label{bbt}
\end{equation}
where $n_t(\tau)=|\tau^t|$ is the number of interactions up to time $t$ such that $\tau^t=\tau\cap\mathbb{R}_+^t$ where $\mathbb{R}^t_+:=\{x\in\mathbb{R}_+:x<t\}\equiv[0,t)$, and $\boldsymbol{\gamma}(x)=\boldsymbol{\sigma}(x)-\mathbf{I}$ $\big($see (\ref{gamma})$\big)$ is the differential operator acting trivially as identity in $\underset{{z\in\tau\setminus x}}{\otimes}\Bbbk_z$.  The solution of (\ref{bbt}) has the form of sequential interactions parameterized by $t$, such that
\begin{equation}
\psi_{t}(\tau)=\overleftarrow{\boldsymbol{\sigma}}^\odot_t(\tau)\psi_0(\tau),\quad \overleftarrow{\boldsymbol{\sigma}}^\odot_t(\tau) :=\underset{x\in\tau^t}{\overset{\leftarrow}{\odot}}\boldsymbol{\sigma}(x),\label{sol}
\end{equation}
where $\odot$ is a chronological \emph{semi}-tensor product \cite{Be92,Be92b} well defined for $x_i<x_{i+1}$  as
\[
\boldsymbol{\sigma}(x_{i+1})\odot\boldsymbol{\sigma}(x_i)= \big(\boldsymbol{\sigma}(x_{i+1})\otimes\mathbf{I}_i\big)\;\big(\mathbf{I}_{i+1} \otimes\boldsymbol{\sigma}(x_i)\big)
\]
such that the composition $\boldsymbol{\sigma}(x_{i+1})\odot\boldsymbol{\sigma}(x_i)$ behaves as tensor product with respect to the clock and non-commutative chronological product in $\mathfrak{h}$. Notice that the action of $\overleftarrow{\boldsymbol{\sigma}}^\odot_t$ on tensor functions $\psi\in\mathbb{H}$ is diagonal so that $\big[\overleftarrow{\boldsymbol{\sigma}}^\odot_t\psi\big](\tau)=\overleftarrow{\boldsymbol{\sigma}}^\odot_t(\tau) \psi(\tau)$. Such diagonal operators are called \emph{graded}, and since $\overleftarrow{\boldsymbol{\sigma}}^\odot_t(\tau)$ is also decomposable on $\tau$ (\ref{sol}) it is called a graded product operator. When $\mathfrak{h}=\mathbb{C}$ the semi-tensor product $\odot$ simply coincides with the tensor product $\otimes$, then $\overleftarrow{\boldsymbol{\sigma}}^\odot_t(\tau)={\boldsymbol{\sigma}}^\otimes_t(\tau)= {\otimes}_{x\in\tau^t}\boldsymbol{\sigma}(x)$. Such graded tensor product functions may also be written using the second quantization functor as ${\boldsymbol{\sigma}}^\otimes=\Gamma({\boldsymbol{\sigma}})$.

Notice that the isometricity of $\boldsymbol{F}_1^\ddag$ is equivalent to the coherence of the pseudo-vacuum $\|\xi^\otimes_\emptyset\|=1$, and this isometrcity allows us to defines a conditional expectation $\mathbb{E}_\emptyset:\mathcal{B}(\mathfrak{h})\bar{\otimes}\mathbb{M}\rightarrow\mathcal{B}(\mathfrak{h})$  given on operators $\mathbf{X}\in\mathcal{B}(\mathfrak{h})\bar{\otimes}\mathbb{M}$ as
\begin{equation}
\mathbb{E}_\emptyset[\mathbf{X}]=\boldsymbol{F}_1\mathbf{X}\boldsymbol{F}_1^\ddag\label{ce}
\end{equation}
where  $\mathbb{M}=\Gamma(\mathfrak{m})$ is the graded Guichardet-Fock $\ddag$-group over $\mathfrak{m}$. That is the affine subspace $\oplus_{n=0}^\infty\mathfrak{m}^{\bar{\otimes} n}|\mathcal{X}$ of the block-diagonal operators in $\mathbb{F}$, where `block-diagonal' refers to operators that preserve the number of quanta of the clock.
\begin{theorem}
Let $\mathfrak{h}$ be a Hilbert space of square-integrable functions $\eta$ defining the probability amplitudes of the internal degrees of freedom of a quantum object,  and suppose that $\eta$ evolves in a unitary manner as a closed system  $\eta\mapsto\eta(t)$, $t>0$, satisfying the differential Schr\"odinger equation
\begin{equation}
\mathrm{d}\eta(t)=-\mathrm{i}H(t)\eta(t)\mathrm{d}t,\quad \eta(0)=\eta,\label{schr}
\end{equation}
for a time-dependent Hamiltonian $H(t)$, having the explicit solution
\begin{equation}
\eta(t)=V^t_0\eta,\quad V^t_s=\overleftarrow{\exp}\Big\{-\mathrm{i}\int^t_sH(x)\mathrm{d}x\Big\}\label{uprop}
\end{equation}
forming a two-parameter family $\{V^t_s\}$ satisfying the isometric {hemigroup property}
\[
V^t_sV^s_r=V^t_r,\;\;V^r_r=I,\quad t>s> r\in\mathbb{R}_+.
 \]
Then this two-parameter family in $\mathfrak{h}$ may be obtained as the conditional expectation (\ref{ce}) of the stochastic propagator  $\overleftarrow{\boldsymbol{\sigma}}^\odot_t\equiv\overleftarrow{\boldsymbol{\sigma}}^\odot_{[0,t)}$ describing the discrete interaction dynamics (\ref{bbt}), such that
\begin{equation} \mathbb{E}_\emptyset\big[\overleftarrow{\boldsymbol{\sigma}}^\odot_t\big]=V^t_0,
\end{equation}
given with respect to the pseudo-unitary interaction operator
\begin{equation}
\boldsymbol{\sigma}(x)=\left[
                         \begin{array}{cc}
                           I & -\mathrm{i}H(x) \\
                           0 & I \\
                         \end{array}
                       \right],\quad \boldsymbol{\sigma}(x)^\ddag=\boldsymbol{\sigma}(x)^{-1}\label{scat}
\end{equation}
entangling the object with the clock in the compound Minkowski-Hilbert space $\mathfrak{h}\otimes\Bbbk$.
\end{theorem}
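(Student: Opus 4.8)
The plan is to turn the conditional expectation into a pseudo-vacuum matrix element of the graded propagator $\overleftarrow{\boldsymbol{\sigma}}^\odot_t$ and then to recognise the resulting chronological chain-integral as the Dyson (simplex) series for $V^t_0$.

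\textbf{Reduction to a matrix element.} Since the pseudo-vacuum embedding is isometric, $\boldsymbol{F}_1\boldsymbol{F}_1^\ddag=I$, and $\boldsymbol{F}_1^\ddag\eta=\eta\otimes\xi_\emptyset^\otimes=\psi_0$ is precisely the input of the interaction dynamics (\ref{bbt}), I would first write, for arbitrary $\zeta,\eta\in\mathfrak{h}$,
\[
\langle\zeta,\mathbb{E}_\emptyset[\overleftarrow{\boldsymbol{\sigma}}^\odot_t]\,\eta\rangle_{\mathfrak{h}}=\langle\zeta\otimes\xi_\emptyset^\otimes,\overleftarrow{\boldsymbol{\sigma}}^\odot_t(\eta\otimes\xi_\emptyset^\otimes)\rangle_{\mathbb{H}}=\int_{\mathcal{X}}\langle\xi_\emptyset^\otimes(\tau),\overleftarrow{\boldsymbol{\sigma}}^\odot_t(\tau)(\eta\otimes\xi_\emptyset^\otimes(\tau))\rangle\,\mathrm{d}\tau,
\]
the last equality using that $\overleftarrow{\boldsymbol{\sigma}}^\odot_t$ is graded (hence in the domain of $\mathbb{E}_\emptyset$) and acts diagonally on chains $\tau\in\mathcal{X}$, with the inner pairing carrying $\zeta$ in the $\mathfrak{h}$-slot and $\xi_\emptyset^\ddag$ on each clock slot. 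Thus $\mathbb{E}_\emptyset[\overleftarrow{\boldsymbol{\sigma}}^\odot_t]$ is literally the matrix element of the stochastic propagator between the input and output pseudo-vacuum states.

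\textbf{Expanding the semi-tensor product on the future input.} On each clock mode $\mathbf{d}$ carries the future null-state to the past null-state, annihilates the past one, and $\mathbf{d}^2=0$. Multiplying out $\overleftarrow{\boldsymbol{\sigma}}^\odot_t(\tau)=\underset{x\in\tau^t}{\overset{\leftarrow}{\odot}}\boldsymbol{\sigma}(x)$, with $\boldsymbol{\sigma}(x)=\mathbf{I}+G(x)\otimes\mathbf{d}$ as in (\ref{scat}), applied to $\eta\otimes\xi_\emptyset^\otimes(\tau)$, therefore produces a sum over subsets $S=\{x_{i_1}<\cdots<x_{i_k}\}\subseteq\tau^t$ whose $S$-term carries, in the $\mathfrak{h}$-slot, the chronologically ordered product $G(x_{i_k})\cdots G(x_{i_1})$ (later times to the left, as dictated by $\odot$), and, in the clock slots, the past null-state on $S$ and the future null-state on $\tau\setminus S$. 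Now pairing with $\xi_\emptyset^\otimes(\tau)$ extracts the \emph{past} component of each clock factor — this is the Minkowski metric $\sigma_1$ at work, since $\xi_\emptyset^\ddag=(1,0)$ — returning $1$ for a past slot and $0$ for a future slot; hence only the fully scattered term $S=\tau$ survives, and it survives only when $\tau=\tau^t\subseteq[0,t)$.

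\textbf{Recognising the Dyson series.} With $G(x)=-\mathrm{i}H(x)$ the chain integral then collapses to
\[
\mathbb{E}_\emptyset[\overleftarrow{\boldsymbol{\sigma}}^\odot_t]=\sum_{n=0}^\infty\;\underset{0<x_1<\cdots<x_n<t}{\int\cdots\int}(-\mathrm{i}H(x_n))\cdots(-\mathrm{i}H(x_1))\,\mathrm{d}x_1\cdots\mathrm{d}x_n=\overleftarrow{\exp}\Big\{-\mathrm{i}\int_0^tH(x)\,\mathrm{d}x\Big\}=V^t_0,
\]
which is exactly the simplex expansion of the time-ordered exponential (\ref{uprop}); its convergence is guaranteed by the standing hypothesis that $V^t_s$ exists. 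To finish I would observe that the identical computation over the window $[s,t)$ gives $\mathbb{E}_\emptyset[\overleftarrow{\boldsymbol{\sigma}}^\odot_{[s,t)}]=V^t_s$, that the hemigroup identity $V^t_sV^s_r=V^t_r$ is inherited from the factorisation $\overleftarrow{\boldsymbol{\sigma}}^\odot_{[r,t)}=\overleftarrow{\boldsymbol{\sigma}}^\odot_{[s,t)}\odot\overleftarrow{\boldsymbol{\sigma}}^\odot_{[r,s)}$, and that the pseudo-unitarity $\boldsymbol{\sigma}^\ddag=\boldsymbol{\sigma}^{-1}$ (equivalently $H=H^\ast$) renders each $V^t_s$ unitary; differentiating in $t$ recovers the Schr\"odinger equation (\ref{schr}).

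\textbf{Main obstacle.} The one genuinely delicate point is the bookkeeping in the expansion step: keeping the noncommuting $H(x)$ in the correct chronological order through the semi-tensor product $\odot$, and being scrupulous that the pseudo-vacuum functional reads off the \emph{past} (not the future) component of each clock spin, so that the projection selects the fully scattered term $S=\tau$ rather than annihilating it. Once that is pinned down, the remainder is a routine Guichardet-Fock integration over the simplices $\mathcal{X}_n$.
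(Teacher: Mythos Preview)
Your proof is correct and follows essentially the same route as the paper: compute the pseudo-vacuum matrix element of the graded propagator, factorise it over the clock modes, and identify the resulting chain integral over $\mathcal{X}$ with the Dyson simplex series for $V^t_0$. The paper compresses your subset-expansion step into the single observation that ${\xi_\emptyset^\otimes}^\ddag(\tau)\overleftarrow{\boldsymbol{\sigma}}^\odot_t(\tau)\xi_\emptyset^\otimes(\tau)$ factorises as the chronological product $\overleftarrow{\prod}_{x\in\tau^t}\xi_\emptyset^\ddag\boldsymbol{\sigma}(x)\xi_\emptyset$ (each factor equal to $-\mathrm{i}H(x)$, and zero on modes $x\geq t$), so that the Fock integral is literally the time-ordered exponential of the one-mode expectation $\xi_\emptyset^\ddag\boldsymbol{\sigma}_t\xi_\emptyset=-\mathrm{i}\int_0^tH(x)\,\mathrm{d}x$; your more explicit bookkeeping of the ``only $S=\tau$ survives'' selection is a correct unpacking of exactly that factorisation.
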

\begin{proof}
Indeed one should first note that
\[
\xi_\emptyset^\ddag\boldsymbol{\sigma}_t\xi_\emptyset=-\mathrm{i}\int_0^tH(x)\mathrm{d}x,
\]
then it is left only to understand that
$
\mathbb{E}_\emptyset\big[\overleftarrow{\boldsymbol{\sigma}}^\odot_t\big] =\overleftarrow{\exp}\big\{\xi_\emptyset^\ddag\boldsymbol{\sigma}_t\xi_\emptyset\big\}
$
which follows from the fact that
\[
{\xi_\emptyset^\otimes}^\ddag \overleftarrow{\boldsymbol{\sigma}}^\odot_t \xi_\emptyset^\otimes= \int_{\mathcal{X}}{\xi_\emptyset^\otimes}^\ddag (\tau)\overleftarrow{\boldsymbol{\sigma}}^\odot_t(\tau) \xi_\emptyset^\otimes(\tau)\mathrm{d}\tau
\]
where
$
{\xi_\emptyset^\otimes}^\ddag (\tau)\overleftarrow{\boldsymbol{\sigma}}^\odot_t(\tau) \xi_\emptyset^\otimes(\tau)= \underset{{x\in\tau^t}}{\overleftarrow{\prod}}{\xi_\emptyset}^\ddag (x){\boldsymbol{\sigma}}(x) \xi_\emptyset(x)
$. Indeed the conditional expectation $\mathbb{E}_\emptyset[\cdot]$ is an averaging procedure over all possible numbers of interactions and all possible configurations for each fixed number of interactions, and it is in this way that it is in fact a Feynamn path integral.
\end{proof}
Note that the stochastic representation (\ref{bbt}) of the Schr\"odinger equation (\ref{schr}) may also be realized as a microscopic dilation of the unitary dynamics in $\mathfrak{h}$,
obtaining the unitary dynamics by averaging over the microscopic ensemble. In this case the microscopic ensemble is an opening up of time into its component particles; that is an ensemble of particles of temporal-spin. 
\\
\linebreak
Historically, Belavkin had informally referred to the pseudo-Hilbert space of the general quantum stochastic dynamics  as the `heaven' space, obtained by the isometric embedding of the `earth' space of a general quantum stochastic dynamics. In this manner the \emph{earth projector} $\mathbf{E}=\boldsymbol{F}^\ddag_1\boldsymbol{F}_1$ is introduced which is a $\ddag$-orthoprojector $\mathbf{E}=\mathbf{E}^2=\mathbf{E}^\ddag$. In the case of Newton-Leibniz flow considered in this paper $\mathbf{E}$ projects the stochastic interaction dynamics  into the object space $\mathfrak{h}$, and then re-embeds the object into the object-clock space $\mathbb{H}$ as
\begin{equation}
\mathbf{E}:{\overleftarrow{\boldsymbol{\sigma}}}^\odot_t\boldsymbol{F}^\ddag_1\mapsto\boldsymbol{F}^\ddag_1 V^t_0.
\end{equation}

\section{Hamiltonian Dynamics as The Expectation of A Pseudo-Poisson Process}
So it has been shown above that the Feynman path integral of the open dynamics of a quantum system interacting with the quantum field of the clock gives rise to the evolution of a closed quantum system. Now we shall consider the Feynman path integral with respect to a Poisson measure as oppose to the standard Lebesgue measure. This means that we now consider the evolution of a closed quantum system to be obtained from a counting process of sequential object-clock interactions as above, but now it is supposed that the random interaction times $x\in\tau$ are distributed according to a Poisson probability law. Since the counting process is in a Minkowski-Hilbert space we refer to such a Poisson process as a pseudo-Poisson process.
\\
\linebreak
The solution (\ref{sol}) of the sequential interaction dynamics of the object with  the clock may be given in the single integral form as
\[
\psi_t(\tau)=\psi_0(\tau)+ \big[\mathbf{N}^t(\boldsymbol{\gamma})\overleftarrow{\boldsymbol{\sigma}}^\odot_t\psi_0\big](\tau)
\]
with respect to the counting process
\begin{equation}
\big[\mathbf{N}^t(\boldsymbol{\gamma})\overleftarrow{\boldsymbol{\sigma}}^\odot_t\psi_0\big](\tau)= \sum_{x\in\tau\cap[0,t)}\big[\boldsymbol{\gamma}(x)\odot \overleftarrow{\boldsymbol{\sigma}}^\odot_x(\tau\setminus x)\big]\psi_0(\tau),\label{ppp}
\end{equation}
where $\tau\setminus x=\{z\in\tau:z\neq x\}$.
One may re-consider this counting process as a pseudo-Poisson process distributed by the Poisson law
 \begin{equation}
\verb"P"^{}_\nu(\mathrm{d}\tau)=(2\nu)^{\otimes}(\tau)\mathrm{d} \tau e^{-\int_0^\infty2\nu(x)\mathrm{d}x},\label{plaw}
\end{equation}
such that the interaction times $\tau$ become Poisson events. The Poisson intensity $\nu$ describes the frequency of object-clock interaction, and the factor 2 arises from the dimensionality of $\Bbbk_x$.

The Guichardet-Fock representation  of  Minkowski-Poisson space is denoted by $\mathbb{H}_\nu$, and it is the space of $\verb"P"_\nu$ square-integrable functions $\varphi$, such that
\[
\|\varphi\|^2_\nu=\int_{\mathcal{X}}\|\varphi(\tau)\|^2\verb"P"_\nu(\mathrm{d}\tau)\in\mathbb{R},
\]
 and it is unitarily equivalent to the Minkowski-Hilbert space $\mathbb{H}$. In order to define functions $\varphi$ in $\mathbb{H}_\nu$ we must first
 note that the second quantization functor $\Gamma$ also admits the decomposition
 \[
 \Gamma\Big(L^1(\mathbb{R}_+)\oplus L^\infty(\mathbb{R}_+)\Big) =\Gamma\Big(L^1(\mathbb{R}^t_+)\oplus L^\infty(\mathbb{R}^t_+)\Big)\otimes\Gamma\Big(L^1(\mathbb{R}_t)\oplus L^\infty(\mathbb{R}_t)\Big)
 \]
 where $\mathbb{R}_+^t=[0,t)$ and $\mathbb{R}_t=[t,\infty)$, allowing us to decompose the Minkowski-Hilbert space at any time $t>0$ as
 \begin{equation}
 \mathbb{H}=\mathbb{H}^t\otimes\mathbb{F}_t,
 \end{equation}
 where $\mathbb{H}^0=\mathfrak{h}$ and $\mathbb{F}_0=\mathbb{F}$.
  Then we define the Minkowski-Poisson space  $\mathbb{H}_\nu^t$ as an isometry of the Minkowski-Hilbert space $\mathbb{H}^t$ for  a strictly positive intensity $\nu$ in $L^\infty(\mathbb{R}_+^t)$ such that both $\nu$ and $\nu^{-1}$ are integrable over $\mathbb{R}^t_+\subset\mathbb{R}_+$ for all $t\in\mathbb{R}_+\equiv\mathbb{R}_+^\infty$. This isometry is given by a unitary transformation of $\psi\in\mathbb{H}^t$ that is
\[
\varphi=\tfrac{1}{\sqrt{2\nu}}^\otimes\psi\; e^{\int_0^t\nu(x)\mathrm{d}x},
\]
 and the  Minkowski-Poisson space $\mathbb{H}_\nu$ may be  given  as the inductive limit $\mathbb{H}_\nu=\underset{t>0}{\cup}\mathbb{H}_\nu^t$, \cite{Be00b}.

Next consider an automorphism of the $\ddag$-algebra $\mathfrak{d}$ and $\ddag$-group $\mathfrak{m}=\mathbf{I}+\mathfrak{d}$ given by the $\ddag$-unitary real Lorentz transformation
 \begin{equation}
 \boldsymbol{\upsilon}_\lambda^\ddag=\left[
                     \begin{array}{cc}
                       \sqrt{\lambda} & 0 \\
                       0 & \sqrt{\tfrac{1}{\lambda}} \\
                     \end{array}
                   \right]
=\boldsymbol{\upsilon}_\lambda^{-1}\label{rl}
\end{equation}
that is well defined in $\Bbbk$ if $\lambda$ and its inverse are essentially bounded. In particular, we see that the vector $\xi_\lambda$ (\ref{basic}) in the restricted space $\Bbbk^t=\Bbbk|\mathbb{R}_+^t$ (of functions $\xi$ with support in $\mathbb{R}_+^t$) defines a vector $\frac{1}{\sqrt{2\lambda}}\xi_\lambda$ in the space $\Bbbk^t(\lambda)$ of $\lambda$-square integrable functions
\[
\|\xi\|^2_t(\lambda)=2\int^t_0\|\xi(x)\|^2\lambda(x)\mathrm{d}x,
\]
with, in particular,
\[
\tfrac{1}{\sqrt{2\lambda}}\xi_\lambda=\boldsymbol{\upsilon}^\ddag_\lambda\mathbf{p},\quad \mathbf{p}= \tfrac{1}{\sqrt{2}}\left[
             \begin{array}{c}
               1 \\
               1 \\
             \end{array}
           \right]
\]
 where $\mathbf{p}$ generates the canonical pseudo-Poisson state $\mathbf{p}^\otimes$ in $\mathbb{H}^t_\nu$. This pseudo-Poisson state is represented in the Minkowski-Hilbert space by the pseudo-Poisson embedding
 \begin{equation}
 {\boldsymbol{\Phi}^t_\nu}^\ddag=\sqrt{2\nu}^\otimes\mathbf{p}^\otimes e^{-\int_0^t\nu(x)\mathrm{d}x}
 \end{equation}
 defining the pseudo-Poisson expectation on the Minkowski-Hilbert space $\mathbb{H}^t$  as
\begin{equation}
\mathbb{P}^t_\nu[\mathbf{X}]=\boldsymbol{\Phi}^t_\nu\mathbf{X}{\boldsymbol{\Phi}_\nu^t}^\ddag,\label{pce}
\end{equation}
 where $\mathbf{X}\in\mathcal{B}(\mathfrak{h})\bar{\otimes}\mathbb{M}^t$ with $\mathbb{M}^t=\mathbb{M}|\mathcal{X}^t$. We may also write
\[
\mathbb{P}^t_\nu[\mathbf{X}]=\mathbb{P}_\nu[\mathbf{X}_t]\equiv \boldsymbol{\Phi}_\nu\mathbf{X}\boldsymbol{\Phi}_\nu^\ddag
\]
where $\mathbf{X}_t\in\mathcal{B}(\mathfrak{h})\bar{\otimes}\mathbb{M}$ is adapted which means $\mathbf{X}_t(\tau)=\mathbf{X}(\tau^t)\otimes\mathbf{I}^\otimes(\tau_t)$, with $\tau^t=\tau\cap\mathbb{R}^t_+$ and $\tau_t=\tau\cap\mathbb{R}_t$. Since if $\mathbf{X}_t$ is adapted in $\mathcal{B}(\mathfrak{h})\bar{\otimes}\mathbb{M}^r$ then $\mathbb{P}^r_\nu[\mathbf{X}_t]=\mathbb{P}^t_\nu[\mathbf{X}_t]\equiv\mathbb{P}^t_\nu[\mathbf{X}]$ for all $r\geq t$.

\begin{proposition}
Let $\nu$ be a strictly positive, bounded function over $\mathbb{R}^t_+$, then the propagator $V^t_0$  of the unitary evolution in $\mathfrak{h}$ (\ref{uprop}) may be obtained as the pseudo-Poisson expectation (\ref{pce}) of the stochastic propagator $\mathbf{S}_t:=\overleftarrow{{\boldsymbol{s}}}_t^\odot$ of the counting process $\mathbf{N}^t({\boldsymbol{l}}) \mathbf{S}_t$, where ${\boldsymbol{l}}(x)=\boldsymbol{s}(x)-\mathbf{I}$, such that
\begin{equation}
\mathbb{P}^t_\nu[\mathbf{S}]=V^t_0,
\end{equation}
where the interaction operator $\boldsymbol{\sigma}$ of (\ref{sol}) is given as a real Lorentz transformation of the interaction operator $\boldsymbol{s}$ such that $\boldsymbol{\sigma}=
\boldsymbol{\upsilon}^\ddag_\nu{\boldsymbol{s}}\boldsymbol{\upsilon}_\nu\equiv\boldsymbol{s}_\nu$, and $\nu$ is the pseudo-Poisson intensity describing the variable frequency of the object-clock interactions.
 \end{proposition}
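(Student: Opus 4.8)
The plan is to reduce the pseudo-Poisson expectation $\mathbb{P}^t_\nu$ to the pseudo-vacuum expectation $\mathbb{E}_\emptyset$ of the Theorem by conjugating everything by the real Lorentz transformation $\boldsymbol{\upsilon}_\nu$ of (\ref{rl}); once this is done the computation is the same time-ordered-exponential identity already used in the proof of the Theorem, with the pseudo-vacuum vector $\xi_\emptyset$ replaced by the coherent vector $\xi_\nu$ of (\ref{basic}).

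\emph{Step 1: a graded conjugation identity.} From the hypothesis $\boldsymbol{\sigma}=\boldsymbol{\upsilon}^\ddag_\nu\boldsymbol{s}\boldsymbol{\upsilon}_\nu$ together with $\boldsymbol{\upsilon}^\ddag_\nu=\boldsymbol{\upsilon}_\nu^{-1}$ we have, pointwise in $x$, $\boldsymbol{s}(x)=\boldsymbol{\upsilon}_{\nu(x)}\boldsymbol{\sigma}(x)\boldsymbol{\upsilon}^\ddag_{\nu(x)}$, where $\boldsymbol{\upsilon}_{\nu(x)}$ acts as the identity on $\mathfrak{h}$ and only on the temporal-spin factor $\Bbbk_x$. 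Because of this grading the conjugation passes factor by factor through the chronological semi-tensor product $\odot$, so on each chain $\tau$
\[
\overleftarrow{\boldsymbol{s}}^\odot_t(\tau)=\Big(\textstyle\bigotimes_{x\in\tau}\boldsymbol{\upsilon}_{\nu(x)}\Big)\,\overleftarrow{\boldsymbol{\sigma}}^\odot_t(\tau)\,\Big(\textstyle\bigotimes_{x\in\tau}\boldsymbol{\upsilon}^\ddag_{\nu(x)}\Big),
\]
that is $\mathbf{S}_t=\Gamma(\boldsymbol{\upsilon}_\nu)\,\overleftarrow{\boldsymbol{\sigma}}^\odot_t\,\Gamma(\boldsymbol{\upsilon}_\nu)^\ddag$, with $\Gamma(\boldsymbol{\upsilon}_\nu)$ graded and $\Gamma(\boldsymbol{\upsilon}_\nu)^\ddag=\Gamma(\boldsymbol{\upsilon}^\ddag_\nu)=\Gamma(\boldsymbol{\upsilon}_\nu)^{-1}$ (second quantization of a $\ddag$-unitary is $\ddag$-unitary). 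Reality of $\nu$ is what keeps $\boldsymbol{l}(x)=\boldsymbol{s}(x)-\mathbf{I}$ pseudo-anti-selfadjoint and $H$ selfadjoint, and the assumption that $\nu,\nu^{-1}$ are bounded/integrable on $\mathbb{R}^t_+=[0,t)$ is exactly what makes $\boldsymbol{\upsilon}_\nu$, the coherent vector $\xi_\nu$, and all the Guichardet integrals below well defined on that interval.

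\emph{Steps 2--4: unwinding the embedding and the Dyson series.} From $\tfrac{1}{\sqrt{2\nu}}\xi_\lambda=\boldsymbol{\upsilon}^\ddag_\lambda\mathbf{p}$ one gets $\sqrt{2\nu(x)}\,\mathbf{p}=\boldsymbol{\upsilon}_{\nu(x)}\xi_{\nu(x)}$, hence $\sqrt{2\nu}^\otimes\mathbf{p}^\otimes=\Gamma(\boldsymbol{\upsilon}_\nu)\,\xi_\nu^\otimes$, so the pseudo-Poisson embedding factorizes as ${\boldsymbol{\Phi}^t_\nu}^\ddag=e^{-\int_0^t\nu}\,\Gamma(\boldsymbol{\upsilon}_\nu)\,\boldsymbol{\Xi}^\ddag_\nu$, where $\boldsymbol{\Xi}^\ddag_\nu\eta=\eta\otimes\xi_\nu^\otimes$ is the un-normalized coherent embedding (the trivial $\mathbb{F}_t$-tail is absorbed using that $\mathbf{S}_t$ is adapted). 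Substituting this and Step 1 into $\mathbb{P}^t_\nu[\mathbf{S}]=\boldsymbol{\Phi}^t_\nu\mathbf{S}_t{\boldsymbol{\Phi}^t_\nu}^\ddag$, the $\Gamma(\boldsymbol{\upsilon}_\nu)$-factors cancel in pairs by $\ddag$-unitarity and leave $\mathbb{P}^t_\nu[\mathbf{S}]=e^{-2\int_0^t\nu}\,\boldsymbol{\Xi}_\nu\overleftarrow{\boldsymbol{\sigma}}^\odot_t\boldsymbol{\Xi}^\ddag_\nu$. Exactly as in the proof of the Theorem, the coherent matrix element of the graded product operator is the Guichardet integral of the chronologically ordered product, namely the time-ordered exponential $\overleftarrow{\exp}\{\int_0^t\xi_\nu^\ddag(x)\boldsymbol{\sigma}(x)\xi_\nu(x)\,\mathrm{d}x\}$; a direct $2\times2$ computation with $\xi_\nu=(\nu,1)^{\mathrm{T}}$, $\xi_\nu^\ddag=(1,\nu)$ and $\boldsymbol{\sigma}(x)$ of the form (\ref{scat}) gives $\xi_\nu^\ddag(x)\boldsymbol{\sigma}(x)\xi_\nu(x)=2\nu(x)I-\mathrm{i}H(x)$. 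Since $2\nu(x)I$ is central it splits off, $\overleftarrow{\exp}\{\int_0^t(2\nu I-\mathrm{i}H)\}=e^{2\int_0^t\nu}\,\overleftarrow{\exp}\{-\mathrm{i}\int_0^tH(x)\mathrm{d}x\}=e^{2\int_0^t\nu}V^t_0$, and the Poisson normalization $e^{-2\int_0^t\nu}$ cancels it, yielding $\mathbb{P}^t_\nu[\mathbf{S}]=V^t_0$.

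The main obstacle is Step 1: one must justify carefully that conjugation by the second-quantized Lorentz transformation genuinely commutes through the non-commutative chronological product $\odot$ --- it is the \emph{grading} (triviality of $\boldsymbol{\upsilon}_\nu$ on $\mathfrak{h}$) that lets $\Gamma(\boldsymbol{\upsilon}_\nu)$ slide past $\overleftarrow{\boldsymbol{\sigma}}^\odot_t$ one interaction at a time --- and one must keep track of the $L^1(\mathbb{R}^t_+)\oplus L^\infty(\mathbb{R}^t_+)$ domain conditions so that $\boldsymbol{\upsilon}_\nu$, the vectors $\xi_\nu$ and $\mathbf{p}$, and the Guichardet integrals all converge on $[0,t)$. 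Everything after that is the same Dyson-series computation as in the Theorem, now carrying the extra scalar drift $2\nu$ which is precisely undone by the pseudo-Poisson normalization.
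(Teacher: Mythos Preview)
Your argument is correct. Both you and the paper share the same key Step~1, namely the graded conjugation identity $\mathbf{S}_t=\Upsilon_\nu\,\overleftarrow{\boldsymbol{\sigma}}^\odot_t\,\Upsilon_\nu^\ddag$ with $\Upsilon_\nu=\Gamma(\boldsymbol{\upsilon}_\nu)$, which is exactly the paper's identification $\Upsilon_\nu^\ddag\mathbf{X}_t\Upsilon_\nu=\overleftarrow{\boldsymbol{\sigma}}^\odot_t$.

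Where you diverge is in the treatment of the pseudo-Poisson state. The paper inserts the pseudo-Weyl operator $\mathbf{W}_\nu$ so that ${\boldsymbol{\Phi}^t_\nu}^\ddag=\Upsilon_\nu\mathbf{W}_\nu{\boldsymbol{F}_1^t}^\ddag$ lands back on the genuine pseudo-vacuum embedding, and then invokes the commutation $[\mathbf{W}_\nu,\Upsilon_\nu^\ddag\mathbf{X}\Upsilon_\nu]=0$ (valid because block-diagonal operators leave the second quantization of $(0,\nu)^\ddag$ invariant) to cancel $\mathbf{W}_\nu$ against $\mathbf{W}_\nu^\ddag$. This gives $\mathbb{P}^t_\nu[\mathbf{X}]=\mathbb{E}_\emptyset[\Upsilon_\nu^\ddag\mathbf{X}_t\Upsilon_\nu]$ and reduces the Proposition \emph{literally} to the Theorem, with no further computation. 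You instead factor ${\boldsymbol{\Phi}^t_\nu}^\ddag=e^{-\int_0^t\nu}\,\Upsilon_\nu\,\boldsymbol{\Xi}_\nu^\ddag$ through the un-normalized coherent embedding $\boldsymbol{\Xi}_\nu^\ddag$, cancel the $\Upsilon_\nu$-factors directly, and compute the resulting coherent matrix element from scratch: the one-particle contribution $\xi_\nu^\ddag\boldsymbol{\sigma}\xi_\nu=2\nu I-\mathrm{i}H$ acquires an extra scalar drift $2\nu I$, whose time-ordered exponential $e^{2\int_0^t\nu}$ is then killed by the Poisson normalization $e^{-2\int_0^t\nu}$. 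This is more elementary (no Weyl operators, no commutator lemma) and makes the role of the Poisson weight completely transparent, at the cost of redoing the Dyson-series computation rather than quoting the Theorem. Both routes are sound; the paper's is the cleaner structural reduction, yours the more self-contained calculation.
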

\begin{proof}
The coherent vector $\xi_\nu^\otimes\exp\big\{-\int_0^t\nu(x)\mathrm{d}x\big\}\in\mathbb{H}^t$ may be obtained as a pseudo-Weyl transformation $\mathbf{W}_\nu\xi_\emptyset^\otimes$ of the pseudo-vacuum vector. The  $\ddag$-unitary pseudo-Weyl operator has the form
\[
\mathbf{W}(\xi)=\exp\left\{\mathbf{A}^\ddag(\xi)-\mathbf{A}(\xi^\ddag)\right\}
\]
given by the pseudo- creation and annihilation operators $\mathbf{A}^\ddag$ and $\mathbf{A}$, whose   action is given on tensor functions  $\psi\in\mathbb{H}^t$ as
\[
\big[\mathbf{A}^\ddag(\xi)\psi\big](\tau)=\sum_{x\in\tau}\xi(x)\otimes\psi(\tau\setminus x),\quad
\big[\mathbf{A}(\xi^\ddag)\psi\big](\tau)=\int_{0}^t\xi^\ddag(x)\psi(x\sqcup \tau)\mathrm{d}x,
\]
 where $\xi\in\Bbbk^t$. Thus the Weyl operator $\mathbf{W}_\nu$ is given by such $\mathbf{W}(\xi)$ with $\xi=(0,\nu)^\ddag$.

 The  pseudo-Poisson embedding  $\mathfrak{h}\rightarrow\mathbb{H}^t_\nu$ of the object Hilbert space into the Minkowski-Poisson space up to time $t$ may now be given as a $\ddag$-unitary transformation of the pseudo-vacuum embedding as
\begin{equation}
{\boldsymbol{\Phi}^t_\nu}^\ddag=\Upsilon_\nu\mathbf{W}_\nu{\boldsymbol{F}_1^t}^\ddag,\quad \Upsilon_\nu=\boldsymbol{\upsilon}^\otimes_\nu.
\end{equation}
 Then since the commutator $[\mathbf{W}_\nu,\Upsilon^\ddag_\nu\mathbf{X}\Upsilon_\nu]=0$ for all $\mathbf{X}\in\mathcal{B}(\mathfrak{h})\bar{\otimes}\mathbb{M}^t$,  following from the realization that the action of $\mathbf{X}$ leaves the second quantization of $\xi=(0,\nu)^\ddag\in\Bbbk^t$ invariant for all $t\in\mathbb{R}_+$, it follows that
\begin{equation}
\mathbb{P}^t_\nu[\mathbf{X}]=
\mathbb{E}_\emptyset\big[\Upsilon^\ddag_\nu\mathbf{X}_t\Upsilon_\nu\big],
\end{equation}
where $\mathbf{X}_t=\mathbf{X}\otimes\mathbf{I}^\otimes_{t}$ is the embedding of $\mathbf{X}$ into $\mathcal{B}(\mathfrak{h})\bar{\otimes}\mathbb{M}$, with $\mathbb{M}=\mathbb{M}^t\bar{\otimes}\mathbb{M}_t$ and $\mathbf{I}^\otimes_t$ is the identity operator in $\mathbb{M}_t$.
Finally we set $\Upsilon^\ddag_\nu\mathbf{X}_t\Upsilon_\nu={\overleftarrow{{\boldsymbol{\sigma}}}}^\odot_t$,  which gives us the required form  $\mathbf{X}=\mathbf{S}\equiv{\overleftarrow{{\boldsymbol{s}}}}^\odot\in\mathcal{B}(\mathfrak{h}) \bar{\otimes}\mathbb{M}^t$ for all $t>0$, where $\boldsymbol{s}=\boldsymbol{\upsilon}_\nu\boldsymbol{\sigma}\boldsymbol{\upsilon}_\nu^\ddag$.
\end{proof}
\begin{remark}
Notice that if the Hamiltonian  has separable time dependence $H(t)=H\nu(t)$ then the interaction operator $\boldsymbol{s}$, defined over $\mathbb{R}_+^t$ for all $t>0$, has the form
\[
\boldsymbol{s}=\left[
                 \begin{array}{cc}
                   I & -\mathrm{i}H\otimes1 \\
                   0 & I \\
                 \end{array}
               \right]
\]
noting that $1\in L^\infty(\mathbb{R}^t_+)\subset L^1(\mathbb{R}^t_+)$.

Also note that the isometric embedding ${\boldsymbol{\Phi}^t_\nu}^\ddag$  of $\mathfrak{h}$ into the Minkowski-Poisson space $\mathbb{H}^t_\nu$ defines the  $\ddag$-orthoprojector $
 {{\boldsymbol{\Phi}}^t_\nu}^\ddag\boldsymbol{\Phi}^t_\nu=\mathbf{P}^t_{\nu}$ where
 \[\mathbf{P}^t_\nu:\mathbf{S} {\boldsymbol{\Phi}^t_\nu}^\ddag\mapsto{\boldsymbol{\Phi}^t_\nu}^\ddag V^t_{0}.\]
 Alternatively we may consider the Lorentz-transformed earth projector $\mathbf{E}^t_\nu=\Upsilon_\nu\mathbf{E}^t\Upsilon^\ddag_\nu\equiv{\boldsymbol{F}^t_\nu}^\ddag\boldsymbol{F}^t_\nu$ given with respect to scaled pseudo-vacuum embedding ${\boldsymbol{F}^t_\nu}^\ddag=\Upsilon_\nu{\boldsymbol{F}_1^t}^\ddag$, since
 \begin{equation}\boldsymbol{\Phi}^t_\nu\mathbf{S}{\boldsymbol{\Phi}^t_\nu }^\ddag =\boldsymbol{F}^t_{\nu}\mathbf{S}{\boldsymbol{F}^t_{\nu}}^\ddag,\end{equation}
 such that the pseudo-Poisson expectation may be given as the scaled pseudo-vacuum expectation $\mathbb{E}_\nu^t[\mathbf{S}]=\mathbf{E}^t_\emptyset\big[\Upsilon^\ddag_\nu\mathbf{S}\Upsilon_\nu\big]$.
\end{remark}

\section{The Schr\"odinger Picture and The Dirac Equation}
In this chapter we shall extend the time domain $\mathbb{R}_+$ into the negative domain as a mathematical convenience. For this extension allows us to fix the quantum object at the origin of the time domain with the events now propagating towards the object from its future without any loss of the output information (the memory) that now flows into the negative time domain. This vision is called the Schr\"odinger picture and introduces a propagation of the quantum clock wave-functions, $\xi$, generated by  the \emph{momentum of the clock} \cite{Be00a,Be00b}. This leads a free dynamics of the clock in the extended Minkowski-Fock space $\widetilde{\mathbb{F}}\otimes\mathbb{F}$ in addition to the otherwise purely stochastic dynamics of the clock interacting with the quantum object that has been considered thus far. The Schr\"odinger picture gives a further insight into the microscopic dynamics of a quantum system. It provides a physical picture in which objects remain present whilst time flows through them, and this flow of time has momentum. Meanwhile, this present object interacts with this flow of time at random times. These random object-clock interactions are reckoned to occur with a frequency called Poisson intensity (introduced in the previous chapter), and this frequency can become lesser or greater as the time continues to flow through the object. The interactions of the present object with the flow of time transforms the state of time into past, generating a history of memory that is entangled with the present object.
\\
\linebreak
The Schr\"odinger picture gives rise to the second quantized Dirac equation as we shall see below. The construction of the Schr\"odinger picture shall begin with the definition of the shift operator $\boldsymbol{u}^t=\exp\{t\partial_x\}$ in the quantum clock  space $\Bbbk\equiv L^1(\mathbb{R}_+)\oplus L^\infty(\mathbb{R}_+)$,  with action  defined on the functions $\xi\in\Bbbk$ as
\begin{equation}
\big[\boldsymbol{u}^t\xi\big](x)\equiv\xi^t(x) =\xi(x+t),\quad t>0.\label{shift}
 \end{equation}
However, this shift is not $\ddag$-unitary but only $\ddag$-coisometry $\boldsymbol{u}^t{\boldsymbol{u}^t}^\ddag=\mathbf{I}$ inducing the projection ${\boldsymbol{u}^t}^\ddag \boldsymbol{u}^t$  onto  $\mathbb{R}_t=\mathbb{R}_+\setminus[0,t)$ such that $\big[{\boldsymbol{u}^t}^\ddag\boldsymbol{u}^t\xi\big](x)=0$ for all $x<t$. This may be resolved by extending the time-space into the negative domain $\mathbb{R}_+\mapsto\mathbb{R}_-\sqcup\mathbb{R}_+$ such that
the vector functions $\xi$ now live in the extended Minkowski-Hilbert space $\mathbb{K}=\widetilde{\Bbbk}\oplus\Bbbk$, where  $\widetilde{\Bbbk}=L^1(\mathbb{R}_-)\oplus L^\infty(\mathbb{R}_-)\cong\Bbbk$ and $\mathbb{R}_-=\widetilde{\mathbb{R}}_+\equiv-\mathbb{R}_+$, such that $\widetilde{x}=-x$, then $\boldsymbol{u}^t$ is $\ddag$-unitary in $\mathbb{K}$ for all $t>0$. $\mathbb{R}_-\sqcup\mathbb{R}_+$ is called the disjoint union of $\mathbb{R}_+$ and $\mathbb{R}_-$, and may be understood as the real line $\mathbb{R}$ with a degenerate origin $(\widetilde{0},0)$.

In the \emph{interaction picture} (\ref{bbt}) of the previous chapters  one may conceive that the object  is propagating into its future along the time-space $\mathbb{R}_+$, and interacting with the clock at arbitrary times $x\in\tau$ where future is transformed into past. Extending the time domain into the past does not alter the mechanics provided that $\boldsymbol{\sigma}(x):=\mathbf{I}$ for all $x\in\mathbb{R}_-$, given that $\xi_\emptyset(\widetilde{x}):=\xi_\emptyset({x})$, but this initial configuration is not a unique choice. However,
the extension allows us to transform to the \emph{Schr\"odinger picture} where the wave-function of the clock now propagates backwards upon $\mathbb{R}$ under the action of the shift $\boldsymbol{u}^t$ (\ref{shift}). Consequently, an object-clock interaction does not occur at some $x>0$ as in the interaction picture, but always at the degenerate origin $(\widetilde{0},0)$ of the time domain where the clock wave-function becomes discontinuous as a result of interaction with the object.  In the Schr\"odinger picture the object no longer moves into its future. Instead, the future comes to the object, and with this flow of time comes the momentum of the clock \cite{Be00a,Be00b}.
\begin{proposition}
 Let  $\xi^t_\emptyset\in\mathbb{K}$  denote  the extended propagating canonical input wave $\boldsymbol{u}^t\xi_\emptyset$,
and let ${\psi}^t=\mathbf{U}^t\psi_t$, where $\psi_t$ is the solution of {(\ref{bbt})} and $\mathbf{U}^t=\mathbf{I}^\otimes\exp\{t\partial_{\varkappa}\}$ is the second quantized unitary shift operator and $\partial_\varkappa=\sum_{x\in\varkappa}\partial_{x}$ is the generator of this free-shift dynamics of the clock. Then the stochastic It\^o-Schr\"odinger equation {(\ref{bbt})} is $\ddag$-unitarily equivalent to the pseudo-selfadjoint Schr\"odinger boundary value problem in $\widetilde{\mathbb{H}}\otimes\mathbb{F}$, where $\widetilde{\mathbb{H}}=\Gamma(\widetilde{\Bbbk})\otimes\mathfrak{h}$, given as
\begin{equation}
\mathrm{i}\partial_{t}{\psi}^{t}(\varkappa)=\boldsymbol{\mu}
{\psi}^{t}(\varkappa),\quad {\psi}^{t}(\varkappa\sqcup\widetilde{0}) =\boldsymbol{\sigma}(t){\psi}^{t}(\varkappa\sqcup0),\label{sp}
\end{equation}
where  $\boldsymbol{\mu}=\mathrm{i}\partial_{\varkappa}= \boldsymbol{\mu}^\ddag$ is the sum of  momenta of the clock, $\boldsymbol{\sigma}(t)$ is the canonical pseudo-unitary object-clock interaction operator (\ref{scat}), and $\varkappa\subset\mathbb{R}$.

Moreover, this Schr\"odinger boundary value problem may be written as the second quantized `mometumless' Dirac equation \cite{Me62} in $\mathbb{C}^2\otimes\mathbb{H}$, with boundary condition in $\mathbb{H}$, that is
\begin{equation}
\left[
  \begin{array}{cc}
    \partial_t+\partial_\tau & 0 \\
    0 & \partial_t-\partial_\tau \\
  \end{array}
\right]\left[
         \begin{array}{c}
           {{\psi}}_-^t \\
           {\psi}_+^t \\
         \end{array}
       \right]=0,\quad {{\psi}}_-^{t}({0}) =\boldsymbol{\sigma}(t){\psi}_+^{t}(0),\label{de}
\end{equation}
where  ${{\psi}}_-^t(\tau)={\psi}^t(\widetilde{\tau})\equiv[\mathbf{R}\psi^t](\tau)$  is the reflection of the part of $\psi^t$ that is restricted to the negative time domain, and $\psi^t_+(\tau)=\psi^t(\tau)$ is just the restriction of $\psi^t$ to the positive time domain. $\tau\subset\mathbb{R}_+$.
\end{proposition}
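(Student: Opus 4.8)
The plan is to differentiate the frame-changed wave $\psi^t=\mathbf{U}^t\psi_t$ and to split its $t$-derivative into a purely free part, governed by the clock momentum $\boldsymbol{\mu}=\mathrm{i}\partial_\varkappa$, and a singular part concentrated at the degenerate origin, which supplies the boundary condition. Since $\boldsymbol{u}^t$ is $\ddag$-unitary on $\mathbb{K}$, its second quantization $\mathbf{U}^t=\mathbf{I}^\otimes\exp\{t\partial_\varkappa\}$ is $\ddag$-unitary on $\widetilde{\mathbb{H}}\otimes\mathbb{F}$, so $\psi_t\mapsto\psi^t$ is a $\ddag$-unitary change of frame and it is enough to exhibit the equation satisfied by $\psi^t$. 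First I would record the explicit form $\psi^t(\varkappa)=\psi_t(\varkappa+t)$, where $\varkappa+t:=\{x+t:x\in\varkappa\}$; inserting the solution $\psi_t(\tau)=\overleftarrow{\boldsymbol{\sigma}}^\odot_{[0,t)}(\tau)\psi_0(\tau)$ of (\ref{bbt}), using $(\varkappa+t)\cap[0,t)=\{x+t:x\in\varkappa\cap[-t,0)\}$ and the standing conventions $\boldsymbol{\sigma}(y)=\mathbf{I}$, $\xi_\emptyset(\widetilde{y})=\xi_\emptyset(y)$ for $y<0$, this becomes
\[
\psi^t(\varkappa)=\Big(\underset{x\in\varkappa\cap[-t,0)}{\overset{\leftarrow}{\odot}}\boldsymbol{\sigma}(x+t)\Big)\psi_0(\varkappa+t).
\]

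From this formula the two halves of (\ref{sp}) follow by separate elementary computations. Differentiating in $t$, the part of the derivative acting on the argument $\varkappa+t$ equals $[\partial_\varkappa\psi^t](\varkappa)$ because $\partial_\varkappa$ commutes with $\mathbf{U}^t$; the remaining part, coming from the moving window $[0,t)$ of the semi-ordered product and from the subscript of $\psi_s$ in (\ref{bbt}), vanishes for every configuration $\varkappa$ not containing $0$, since each point $x+t$ of $\varkappa+t$ keeps constant offset $-x$ from the parameter $t$. This gives $\mathrm{i}\partial_t\psi^t(\varkappa)=\mathrm{i}\partial_\varkappa\psi^t(\varkappa)=\boldsymbol{\mu}\psi^t(\varkappa)$ away from the doubled origin. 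For the boundary relation I would evaluate the displayed formula on the two sides of that origin: at $0\in\mathbb{R}_+$ the point sits at $t\notin[0,t)$ in $\varkappa+t$ and has not yet interacted, whereas at $\widetilde{0}=0^-\in\mathbb{R}_-$ it sits at $t^-\in[0,t)$ and carries an extra factor $\boldsymbol{\sigma}(t^-)$; letting the point approach the origin and using the symmetry of $\psi_0$ across it yields $\psi^t(\varkappa\sqcup\widetilde{0})=\boldsymbol{\sigma}(t)\psi^t(\varkappa\sqcup0)$. That this constitutes a \emph{pseudo-selfadjoint} boundary value problem is exactly the point of doubling $\mathbb{R}_+$ to $\mathbb{R}$: on the half-line $\mathrm{i}\partial_x$ carries an obstructing boundary term, on $\mathbb{R}$ it is $\ddag$-selfadjoint, and the $\ddag$-unitarity $\boldsymbol{\sigma}(t)^\ddag=\boldsymbol{\sigma}(t)^{-1}$ of (\ref{scat}) is precisely the condition under which the gluing at the origin is compatible with that symmetry.

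To reach the Dirac form (\ref{de}) I would use $\mathbb{K}=\widetilde{\Bbbk}\oplus\Bbbk$, hence the tensor factorization of $\Gamma(\mathbb{K})$ over the negative and positive time domains, together with the reflection $\mathbf{R}$. Setting $\psi^t_-(\tau)=[\mathbf{R}\psi^t](\tau)=\psi^t(\widetilde{\tau})$ and $\psi^t_+(\tau)=\psi^t(\tau)$ for $\tau\subset\mathbb{R}_+$, the single transport equation $\partial_t\psi^t=\partial_\varkappa\psi^t$ decomposes into a left-mover and a right-mover: under $x\mapsto-x$ one has $\partial_x\mapsto-\partial_x$, so $(\partial_t+\partial_\tau)\psi^t_-=0$ and $(\partial_t-\partial_\tau)\psi^t_+=0$, which is the diagonal first-order system of (\ref{de}); and the gluing of (\ref{sp}) turns into $\psi^t_-(0)=\boldsymbol{\sigma}(t)\psi^t_+(0)$ after identifying $\widetilde{0}$ with its reflection $0$. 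One then recognizes this pair as the second quantization of the momentumless Dirac equation of \cite{Me62}, with the temporal-spin $\mathbb{C}^2$ of the clock quanta in the role of the spinor index and the configuration coordinate in the role of space.

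I expect the real work to lie in the second step: the interaction-picture solution $\psi_t$ is only piecewise smooth in $t$ and is a function on distributional configurations, so one must argue with care that differentiating $\mathbf{U}^t\psi_t$ produces nothing beyond the free generator $\partial_\varkappa$ apart from a defect supported on the configurations through the origin, and that this defect is exactly the $\boldsymbol{\sigma}(t)$-jump rather than a spurious lower-dimensional contribution. Once this localization is in place, the $\ddag$-unitarity of the frame change, the $\ddag$-selfadjointness of $\boldsymbol{\mu}$ on the doubled line, and the reflection identity giving (\ref{de}) are essentially formal.
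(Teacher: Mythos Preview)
Your approach is correct but differs in emphasis from the paper's. You argue by direct computation: write $\psi^t(\varkappa)=\psi_t(\varkappa+t)$ explicitly, differentiate in $t$, and localize the singular contribution at the degenerate origin to read off the boundary condition. The paper instead argues structurally: it observes that both $\mathbf{U}^t$ and $\overleftarrow{\boldsymbol{\sigma}}^\odot_t$ are $\ddag$-unitary, and then invokes the multiplicative $\boldsymbol{u}^t$-cocycle identity
\[
{\boldsymbol{u}^t}^\ddag\boldsymbol{\sigma}_r\,\boldsymbol{u}^t\,\boldsymbol{\sigma}_t=\boldsymbol{\sigma}_{r+t}
\]
to conclude that $\mathbf{V}^t:=\mathbf{U}^t\overleftarrow{\boldsymbol{\sigma}}^\odot_t$ is a one-parameter $\ddag$-unitary semigroup, whose generator is then the pseudo-selfadjoint $\boldsymbol{\mu}$ on the stated domain (absolutely continuous away from $0$, right-continuous at $0$ with the $\ddag$-unitary boundary condition). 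Your route has the advantage of actually exhibiting the equation and making the origin of the boundary jump transparent; the paper's route sidesteps the distributional differentiation you flag as ``the real work'' by packaging it into the cocycle-to-semigroup passage, which is the standard device in Belavkin's formalism. For the Dirac form both proofs proceed identically, by reflecting the negative half-line and reading off the left/right movers.
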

\begin{proof}
  Following closely to Belavkin \cite{Be00a} we note that the boundary value problem (\ref{sp}) is pseudo-selfadjoint on the extended domain of $\widetilde{\mathbb{H}}\otimes\mathbb{F}$-valued functions, absolutely continuous at non-zero $x\in\varkappa$, and right-continuous at any $0\in\varkappa$ satisfying also the pseudo-unitary boundary condition. Indeed the pseudo-unitarity of both $\mathbf{U}^t$ and $\overleftarrow{\boldsymbol{\sigma}}^\odot_t$ implies the pseudo-unitarity of the Schr\"odinger propagator $\mathbf{V}^t:=\mathbf{U}^t\overleftarrow{\boldsymbol{\sigma}}^\odot_t$. Moreover, the map $t\mapsto\mathbf{V}^t$ has the multiplicative representation property $\mathbf{V}^r\mathbf{V}^t=\mathbf{V}^{r+t}$ of the semi-group $\mathbb{R}_+\ni r,t$ because the map $t\mapsto\boldsymbol{\sigma}_t(x)$ is a multiplicative $\boldsymbol{u}^t$-cocycle such that
 \[
 {\boldsymbol{u}^t}^\ddag\boldsymbol{\sigma}_r \boldsymbol{u}^t\boldsymbol{\sigma}_t= \boldsymbol{\sigma}_{r+t}.
 \]

 The Dirac equation follows naturally from a reflection of the negative time domain so that we have two positive time domains, one for input and the other for output. Notice that we may define $\psi^t_{\pm}=\Psi^t_{\pm}\eta$, where $\Psi^t_\pm\in\mathcal{L}(\mathfrak{h},\mathbb{H})$, that is the space of adjointable maps from $\mathfrak{h}$ into $\mathbb{H}$, such that for any $\varkappa=\widetilde{\tau}\sqcup\tau$, $\tau\subset\mathbb{R}_+$, we have $\psi^t(\varkappa)= \Big[ [\mathbf{R}\Psi_-^t](\widetilde{\tau})\odot\Psi_+^t(\tau)\Big]\eta$.
\end{proof}
We may also present this Schr\"odinger boundary value problem in the extended Minkowski-Poisson space $\widetilde{\mathbb{H}}^r_\nu\otimes\mathbb{F}^r_\nu$ if the interaction intensity $\nu$ is strictly positive, bounded, \emph{and smooth} function over $\mathbb{R}_-^r\sqcup\mathbb{R}_+^r$, where $\mathbb{R}^r_{-}=(-r,0]\equiv -\mathbb{R}^r_+$. It has a similar form to (\ref{sp}) but with an additional potential $\phi(\varkappa)=\sum_{x\in\varkappa}\phi(x)$, and may be written as
\begin{equation}
\mathrm{i}\partial_{t}{\varphi}^{t}(\varkappa)=\boldsymbol{\mu}_\phi(\varkappa,\mathrm{i} \partial_\varkappa)
{\varphi}^{t}(\varkappa),\quad {\varphi}^{t}(\varkappa\sqcup\widetilde{0}) ={\boldsymbol{s}}(t){\varphi}^{t}(\varkappa\sqcup0),\label{pbvp}
\end{equation}
as it was done similarly in \cite{Be00a} (although that was neither second quantized nor in Minkowski space, but the underlying principles are the same). The operator
\begin{equation}\boldsymbol{\mu}_\phi(\varkappa,\mathrm{i} \partial_\varkappa)=\mathrm{i}\big(\phi(\varkappa)+\partial_\varkappa\big)\equiv \tfrac{1}{\sqrt{\nu}}^\otimes(\varkappa)\mathrm{i}\partial_\varkappa\sqrt{\nu}^\otimes(\varkappa)\end{equation}
 is the sum of the clock momenta in the potential, and $\varphi^t={\tfrac{1}{\sqrt{2\nu}}}^\otimes\psi^t\;e^{\int^r_{-r}\nu(x)\mathrm{d}x}$, $r\geq t$, is the unitary transformation of $\widetilde{\mathbb{H}}^r\otimes\mathbb{F}^r$ into $\widetilde{\mathbb{H}}^r_\nu\otimes\mathbb{F}^r_\nu$. The potentials $\phi(x)$ generate the clock-object interaction amplitudes $\sqrt{\nu}$ so that they are given as the derivatives $\phi(x)=\partial_x\vartheta(x)$ of the hyperbolic angles $\vartheta(x)=\tfrac{1}{2}\ln\nu(x)$ associated with the real Lorentz transforms $\boldsymbol{\upsilon}_\nu(x)$ at each $x$ (\ref{rl}).
Indeed the boundary value problem (\ref{pbvp}) is unitarily equivalent to (\ref{sp}), and therefore also to (\ref{bbt}), and so the Feynman path integral of this Schr\"odinger-Poisson boundary value problem of an object interaction with a flow of time in a potential is indeed Hamiltonian dynamics of the quantum object. In this case we the expectation should be the Feynamn-Poisson path integral as it is given with respect to a Poisson measure with intensity $\nu$.

In the Schr\"odinger picture the time-waves $\xi^t$ continuously flow through the object  with   momenta given by the operation of $\boldsymbol{\mu}$. When the object is fixed at the origin of the time domain its evolution is generated by  its entanglement with the otherwise freely propagating time-waves. The entanglements occur at random times $t\in\tau$ corresponding to the zeros in $\tau-t$, where $\tau$ is an initial configuration of potential interactions.  If the time-waves interact with the object they  continue to propagate into the past but the object's state $\eta$ is transformed by the action of the generator $G(x)=-\mathrm{i}H(x)$ $\big($see (\ref{pui},\ref{act})$\big)$, and this `pseudo-reduction' of the object state $\eta$  may be understood as an indirect measurement of the passing of time; that is the transition of future into past. As the interactions continue the object comes to be evolved under the action of the chronological exponent $G(\varkappa+t)=\underset{{x\in\varkappa}}{\overleftarrow{\prod}}G(x+t)$, where $\varkappa\subset\mathbb{R}_-^t$.

The Dirac equation is similar, but considers the object at the origin of a \emph{radial} time-space with input-output degeneracy corresponding to the two degrees of freedom of the clock. Thus we have a space of  incoming flux of time particles and a space of outgoing flux of time particles. The input is the future and the output is the past. The object connects these two independent channels and is entangled with the latter.

When studying the dynamics in the extended Minkowski-Poisson space we were able to obtain the Hamiltonian dynamics in $\mathfrak{h}$ from a Poisson expectation of the Schr\"odinger-Poisson dynamics, which is given with respect to the Poisson law (\ref{plaw}). Note that when speaking of a Poisson intensity $\nu$ we have a Poisson measure with a factor $2\nu$. This may in fact be understood as the sum $\nu_-+\nu_+$ of independent intensities of the two channels; this means that $\nu_+$ is the intensity of the input channel and $\nu_-$ is the intensity of the output channel. Generally these intensities need not be the same. The intensity of the flow of time into an object may be greater or lesser than intensity of the flow of time out of an object.
\\
\linebreak
The remainder of this chapter is optional and it considers an alternative momentum of the clock. In the above construction the clock momentum generates a flow of both the past and future degrees of freedom of the clock towards the present quantum object from the future. In the construction below the considered clock momentum generates a flow of the future degree of freedom of the clock in one direction and of the past degree of freedom of the clock in the opposite direction.

This begins with a consideration of the Schr\"odinger boundary value problem in $\widetilde{\mathbb{H}}\otimes\mathbb{F}$ with alternative time-wave momentum   \begin{equation}\boldsymbol{\zeta}=\sum_{x\in\varkappa}
  \left[
                         \begin{array}{cc}
                           -\mathrm{i}\partial_x & 0 \\
                           0 & \mathrm{i}\partial_x \\
                         \end{array}
                       \right]\otimes \mathbf{I}^\otimes_{\varkappa\setminus x},\end{equation}
  $\varkappa\subset\mathbb{R}$,
  such that the two degrees of freedom of the clock  are considered to propagate in opposite directions over $\mathbb{R}$. Thus we shall consider a {partial} reflection of the boundary value problem (\ref{sp}) with respect to the graded product operator
\[
\mathbf{J}=\left[
             \begin{array}{cc}
               R & 0 \\
               0 & I \\
             \end{array}
           \right]^\otimes,\quad \mathbf{J}^\ddag\mathbf{J}=\mathbf{R}.
\]
Here $R$ is the unitary clock-coordinate reflection operator $\big[RG^t\big](x)=G^t(\widetilde{x})$,
 where $G^t(x)=G(x+t)$, and $\mathbf{J}=\mathbf{J}^\ast=\mathbf{J}^{-1}$.  In this picture the Schr\"odinger boundary value problem becomes
\begin{equation}
\mathrm{i}\partial_t{\psi}_{\mathbf{J}}^t(\varkappa)= \boldsymbol{\zeta}{\psi}_{\mathbf{J}}^t(\varkappa),\quad {\psi}_{\mathbf{J}}^t(\varkappa\sqcup\widetilde{0}) ={\boldsymbol{\varsigma}}(t){\psi}_{\mathbf{J}}^t(\varkappa\sqcup0),\label{sp2}
\end{equation}
where ${\psi}^t_{\mathbf{J}}=\mathbf{Z}^t\mathbf{J}\overleftarrow{\boldsymbol{\sigma}}^\odot_t\psi_0 \equiv\mathbf{J}\mathbf{U}^t\overleftarrow{\boldsymbol{\sigma}}^\odot_t \psi_0$ where  $\mathbf{Z}^t$ is the second quantized shift generated by $\boldsymbol{\zeta}$, the sum of clock momenta, and $\boldsymbol{\varsigma}=\mathbf{J}\boldsymbol{\sigma}\mathbf{J} \equiv\boldsymbol{\sigma}_{\mathbf{J}}$. This shift is not $\ddag$-unitary but we naturally introduce the pseudo-involution $\mathbf{Z}^\dag:=\mathbf{J}{\big(\mathbf{JZJ}\big)}^\ddag\mathbf{J}\equiv\boldsymbol{R}\mathbf{Z}^\ast\boldsymbol{R}$, such that the $\dag$-involution is induced by the pseudo-metric $\boldsymbol{R}=\mathbf{J}\sigma^\otimes_1\mathbf{J}\equiv\mathbf{R}\sigma_1^\otimes$. Indeed we see that $\boldsymbol{\zeta}^\dag=\boldsymbol{\zeta}$ and $\mathbf{Z}$ is $\dag$-unitary.

The restriction ${\psi}^t_{\mathscr{R}}=\mathbf{J}{\psi}^t|\mathcal{X}$ of ${\psi}_\mathbf{J}^t$ to the positive simplex $\mathcal{X}$, is an element of $\mathbb{H}\equiv L^1(\mathcal{X})\otimes\mathfrak{h}\otimes L^\infty(\mathcal{X})$ having the component functions
\begin{equation}
{\psi}^t_{\mathscr{R}}(\upsilon_t,\omega_t)=\big[U^{-t}R^\otimes G^\odot\big](\upsilon_t)\eta\otimes \big[U^t1\big](\omega_t),\label{prop}
\end{equation}
where $\upsilon_t,\omega_t$ are respectively the restrictions of $\widetilde{\tau}+t,\tau-t$ to the positive domain $\mathcal{X}\ni\tau$, and $U^t=\exp\{t\partial_{\tau}\}$ satisfying $U^{-t}R=RU^t$.
\begin{remark}
 Note that the function (\ref{prop}) describes a chronological exponent of the generators $G$ composing the object's history, and propagates upon the outgoing  chain $\upsilon_t$. Similarly,  the future oriented temporal spins propagate towards the object from its future upon the incoming chain $\omega_t$. Also note that for a fixed $\tau\in\mathcal{X}$ the cardinality of $\upsilon_t\sqcup\omega_t\cong\tau$ is fixed, but as $t$ increases $|\upsilon_t|$ increases and $|\omega_t|$ decreases.
\end{remark}

Now we consider the total reflection $\mathbf{R}=\mathbf{J}^\ddag\mathbf{J}$ of the restriction ${\psi}_{\mathbf{J}}^t|\widetilde{\mathcal{X}}$, denoted by ${\psi}^t_{\mathscr{I}}=\mathbf{R}{\psi}_{\mathbf{J}}^t|{\mathcal{X}}$. The functions ${\psi}^t_{\mathscr{R}}$ and ${\psi}^t_{\mathscr{I}}$ are now both elements of $\mathbb{H}$, but we would like to consider the vector operators ${\Psi}^t\in\mathcal{L}(\mathfrak{h},\mathbb{H})$ with the property ${\psi}^t_{\mathscr{R}}={\Psi}^t_{\mathscr{R}}\eta$ and ${\psi}^t_{\mathscr{I}}= {\Psi}^t_{\mathscr{I}}\eta$. Then we find that in the co-space $\mathcal{L}(\mathbb{H},\mathfrak{h})$ we may consider the $\mathcal{B}(\mathfrak{h})$-valued linear functional ${\Psi}^{t\;\flat}_{\mathscr{I}}$ that maps the dynamical information ${\Psi}^{t}_{\mathscr{R}}$ into the deterministic unitary dynamics of the object in $\mathfrak{h}$ such that
\begin{equation}
{{\Psi}^{t\;\flat}_{\mathscr{I}}}{\Psi}^{t}_{\mathscr{R}}=V^t_0,
\end{equation}
  where the $\flat$-involution is induced by the pseudo-metric $\mathrm{i}\sigma_2\equiv\pi(\mathrm{d}t)-\pi(\mathrm{d}t)^\ast$. Indeed we may  construct a second quantized `massless' Dirac equation \cite{Me62} with boundary condition, corresponding to the Schr\"odinger boundary value problem (\ref{sp2}),
\[
\left[
  \begin{array}{cc}
    \mathrm{i}\partial_t-\boldsymbol{\zeta} & 0 \\
    0 & \mathrm{i}\partial_t+\boldsymbol{\zeta} \\
  \end{array}
\right]
\left[
    \begin{array}{c}
      {\psi}^{t}_{\mathscr{I}} \\
      {\psi}^{t}_{\mathscr{R}} \\
    \end{array}
  \right]=0,\quad{\psi}^{t}_{\mathscr{R}}(0)={\boldsymbol{\varsigma}}(t) {\psi}^{t}_{\mathscr{I}}(0).
  \]

\section{Regarding The General Theory}
The general theory of quantum stochastic calculus may be handled in a similar manner to this presentation of the deterministic dynamics of a closed system. The most basic example of quantum stochastic calculus works with operators in the Guichardet-Fock space $\mathcal{F}=\Gamma(\mathfrak{k})$ where $\mathfrak{k}=L^2(\mathbb{R}_+)$ is the space of square-integrable functions over $\mathbb{R}_+$ and $\Gamma$ is the second quantization functor. Using the Belavkin formalism we can replace the Hilbert space $\mathfrak{h}$ considered above with the Guichardet-Fock space $\mathcal{F}:=\Gamma\big(L^2(\mathbb{R}_+)\big)$, in which case we no longer have a Schr\"odinger equation but instead a quantum stochastic differential equation. However, one may proceed as it is done here in the deterministic case and compose the space $\mathcal{F}$ with the Minkowski-Fock space $\mathbb{F}=\Gamma\big(L^1(\mathbb{R}_+)\big)\otimes \Gamma\big(L^\infty(\mathbb{R}_+)\big)$ of the clock. Thus we obtain the dilated Minkowski-Fock space
\[
\mathcal{F}\otimes\mathbb{F}=\Gamma\big(L^1(\mathbb{R}_+)\oplus L^2(\mathbb{R}_+) \oplus L^\infty(\mathbb{R}_+)\big) :=\mathfrak{F}.
\]
In this way Belavkin was able to study quantum stochastic processes in terms of the fundamental pseudo-Poisson processes in the dilated Minkowski-Fock space $\mathfrak{F}=\Gamma(\mathfrak{K})$, where
\[
\mathfrak{K}=L^1(\mathbb{R}_+)\oplus L^2(\mathbb{R}_+) \oplus L^\infty(\mathbb{R}_+)\equiv\mathfrak{k}^-\oplus \mathfrak{k}^\circ \oplus\mathfrak{k}^+.
\]
He discovered that quantum stochastic processes may be obtained from fundamental pseudo-Poisson processes having precisely the graded product form considered in this paper, but with an additional noise degree of freedom of the clock corresponding to the additional space $\mathfrak{k}^\circ=L^2(\mathbb{R}_+)$. That is the space of a single particle of quantum noise, and may indeed be considered as a more general space than $L^2(\mathbb{R}_+)$.

This basically means that the operators in the Guichardet-Fock space $\mathcal{F}$, quantum stochastic processes, may be given as the conditional expectations $\mathbb{E}_\emptyset\big[\overleftarrow{\boldsymbol{\sigma}}^\times_t\big]$ (compressing the second quantized clock) of the graded product stochastic
propagators $\overleftarrow{\boldsymbol{\sigma}}^\times_t$ of counting processes, with the interaction operators $\boldsymbol{\sigma}$ now have the general form
\[
\boldsymbol{\sigma}=\left[
                      \begin{array}{ccc}
                        1 & \sigma^-_\circ & \sigma^-_+ \\
                        0 & \sigma^\circ_\circ & \sigma^\circ_+ \\
                        0 & 0 & 1 \\
                      \end{array}
                    \right],
\]
where $\sigma^-_+$ is the integrable deterministic generator, $\sigma^-_\circ$ is the annihilation functional on the noise space $\mathfrak{k}^\circ$, $\sigma^\circ_+$ is the creation vector function in $\mathfrak{k}^\circ$, and $\sigma^\circ_\circ$ is scattering operator that generates the counting dynamics; it operates in $\mathfrak{k}^\circ_t$ at each time $t$, and $\sigma^\circ_\circ\in L^\infty(\mathbb{R}_+)$ if $\mathfrak{k}^\circ= L^2(\mathbb{R}_+)$. In the absence of noise there are no $\circ$-terms, and we recover the structures considered above with $H=\mathrm{i}\sigma^-_+$.




\end{document}